\newcommand{\bc}{{\bf C}}
\newcommand{\bfx}{{\bf X}}
\newcommand{\mskt}{{MS$k$T} }
\newcommand{\bx}{{\bf X}}
\newtheorem{definition}{Definition}
\newtheorem{proposition}{Proposition}
\newtheorem{corollary}{Corollary}
\newtheorem{theorem}{Theorem}
\newtheorem{lemma}{Lemma}
\begin{document}

\title{Efficient Learning of Optimal Markov Network Topology \\with $k$-Tree Modeling}

\author{Liang Ding\footnote{To whom correspondence should be addressed. Co-authors are listed alphabetically. $^{\, \dagger}$Department of Computer Science, The University of Georgia, Athens, GA; $^{\ddagger}$St. Jude Children's Research Hospital, Memphis, TN; $^{\mathsection}$Department of Plant Biology, The University of Georgia, Athens, GA.}
$^{\, \ddagger}$, Di Chang$^{\,\dagger}$, Russell Malmberg$^{\, \mathsection}$, Aaron Martinez$^{\, \dagger}$,\\ 
David Robinson$^{\, \dagger}$, Matthew Wicker$^{\, \dagger}$, Hongfei Yan$^{\,\dagger}$, and Liming Cai\footnotemark[1]$\mbox{ }^{\, \dagger}$}

\date{\today}
\maketitle

\abstract The seminal work of Chow and Liu (1968) shows that approximation of a finite probabilistic system by Markov trees can achieve the minimum information loss with the topology of a maximum spanning tree. Our current paper generalizes the result to Markov networks of 
tree width $\leq k$, for every fixed $k\geq 2$. In particular, we prove that approximation of a finite probabilistic system with such Markov networks has the minimum information loss when the network topology is achieved with a maximum spanning $k$-tree. While constructing a maximum spanning $k$-tree is intractable for even $k=2$, we show that polynomial algorithms can be ensured by a sufficient condition accommodated by many meaningful applications.  In particular, we prove an efficient algorithm for learning the optimal topology of higher order correlations among random variables that belong to an underlying linear structure. 

\vspace{2mm}

{\bf Keywords}: Markov network, joint probability distribution function, $k$-tree, spanning graph, tree width, Kullback-Leibler divergence, mutual information

\section{Introduction}
We are interested in effective modeling of complex systems whose the behavior is determined by  relationships among uncertain events. When such uncertain events are quantified as random variables ${\bf X}= \{ X_1, \dots, X_n\}$, the system is characterizable with an underlying joint probability distribution function $P({\bf X})$ \cite{Pearl1988,KollerAndFriedman2010}. Nevertheless, estimation of the function $P({\bf X})$ from observed event data poses challenges as relationships among events may be intrinsic and the $n^{\rm th}$-order  function $P({\bf X})$ may be very difficult, if not impossible, to compute. 
A viable solution is to approximate $P({\bf X})$ with lower order functions, as a result of constraining the dependency relationships among random variables. 
Such constraints can be well characterized with the notion of {\it Markov network}, where the dependency relation of random variables is defined by a non-directed graph $G=({\bf X}, E)$, with the edge set $E$ to specify the topology of the dependency relation. 
In a Markov network, two variables not connected with an edge are independent conditional on the rest of the variables \cite{KindermannAndSnell1980,Pearl1988,RueAndHeld2005}.


Model approximation with Markov networks needs to address two essential issues: the quality of the approximation, and the feasibility to compute such approximated models. Chow and Liu \cite{ChowAndLiu1968} were the first to address both issues in investigating 
networks of tree topology. They measured the information loss with Kullback-Leibler divergence \cite{KullbackAndLeibler1951} between 
$P({\bf X})$, the unknown distribution function,  and $P_G({\bf X})$, the distribution function estimated under the dependency graph $G$. They showed that 
the minimum information loss is guaranteed by the topology corresponding to a maximum spanning tree which can be computed very efficiently (in linear time in the number of random variables). 

It has been difficult to extend the seminal work of
simultaneous optimal and efficient 
learning to the realm of arbitrary Markov networks due to the computational intractability nature of the problem \cite{ChickeringEtAl1994}. To overcome the barrier, research in Markov network learning has sought heuristics algorithms that are efficient yet without an optimality guarantee in the learned topology  \cite{LeeEtAl2006,KoivistoAndSood2004,TeyssierAndKoller2005,KollerAndFriedman2010,DalyEtAl2011,YuanAndMalone2013}. A more viable approach has been to consider the nature of probabilistic networks formulated from real-world applications, which often are constrained and typically tree-like \cite{Dasgupta1999,MeilaAndJordan2000,KargerAndSrebro2001,BachAndJordan2002,Srebro2003,NarasimhanAndBilmes2003,ElidanAndGould2008,BradleyAndGuestrin2010,SzantaiAndKovacs2012,YinEtAl2015}. 
All such models can be quantitatively characterized with networks that have tree width $\leq k$, for small $k$. Tree width is a metric that measures how much a graph is tree-like \cite{Bodlaender2006}. 
It is closely related to the maximum size of a graph separator, i.e., a characteristic of any set of random variables ${\bf X}_S$ upon which two (sets) of random variables ${\bf X}_A$ and ${\bf X}_B$ are conditionally independent ${\bf X}_A \bot {\bf X}_B | {\bf X}_S$. 
Such networks have the advantage to support efficient inference, and other queries over constructed models \cite{KwisthoutEtAl2010}. In addition, many real-world applications actually imply Markov (Bayesian) networks of small tree width. In spite of these advantages, however, optimal learning of such networks is unfortunately difficult, for example, it is intractable even when tree width $k=2$ \cite{KargerAndSrebro2001,Srebro2003}. Hence, the techniques for learning such Markov networks have heavily relied on 
heuristics that may not guarantee the quality. 

In this paper, we generalize the seminal work of Chow and Liu from Markov trees to Markov networks of tree width $\leq k$. First, we prove that model approximation with Markov networks of tree width $\leq k$ has the minimum information loss when the network topology is the one for a maximum spanning $k$-tree. A $k$-tree is a maximal graph of tree width $k$ to which no edge can be added without an increase of tree width. Therefore the quality of such Markov networks of tree width $\leq k$ is also optimal with a maximum spanning $k$-tree.  Second, we reveal sufficient conditions satisfiable by many applications, which guarantee polynomial algorithms to compute the maximum spanning $k$-tree. That is, under such a condition, the optimal topology of Markov networks with tree width $\leq k$ can be learned in time $O(n^{k+1})$ for every fixed $k$.

We organize this paper as follows. Section 2 presents preliminaries and introduces Markov $k$-tree model. In section 3, we prove that Kullback-Leibler divergence $D_{\small\rm KL}(P\parallel P_{\small G})$ is minimized when the approximated distribution function $P_{\small G}$ is estimated with a maximum spanning $k$-tree $G$. In section 4, we present some conditions, each being sufficient for permitting $O(n^{k+1})$-time algorithms to compute maximum spanning $k$-tree and thus to learn the optimal topology of Markov networks of tree width $\leq k$. Efficient inference with Markov $k$-tree models is also discussed in section 4. We conclude in section 5 with some further remarks.

\section{Markov $k$-Tree Model}

The probabilistic modeling of a finite system involves random variables ${\bf X} = \{X_1, \dots, X_n\}$ with observable values.  
Such systems are thus characterized by dependency relations among the random variables, which may be learned by estimating the joint probability distribution function $P({\bf X})$. Inspired by Chow and Liu's work \cite{ChowAndLiu1968}, we are interested in approximating the $n^{\rm th}$-order dependency relation among random variables by a $2^{\rm nd}$-order (i.e., pairwise) relation. Our work first concerns  how to minimize the information loss in the approximation.

A pairwise dependency relation between random variables ${\bf X} = \{X_1, \dots, X_n\}$\footnote{Here we slightly abuse 
notations. We use the same symbol $X_i$ for random variable and its corresponding vertex in the topology graph and rely on context to distinguish them.} can be defined  with a graph $G=(\bx, E)$, where binary relation $E \subseteq \bx \times\bx$ is defined such that $(X_i, X_j) \not \in E$ if and only if variables $X_i$ and $X_j$ are independent conditional only on the rest of the variables, i.e., let 
${\bf Y}={\bf X}\backslash \{X_i, X_j\}$,
\begin{equation}
\label{cond-ind}
 	(X_i, X_j) \not \in E \Longleftrightarrow P_G(X_i, X_j |{\bf Y}) = P_G(X_i |{\bf Y})  P_G(X_j |{\bf Y}) 
\end{equation}
 We call $P_G(\bx)$ a {\it Markov network with the topology graph} $G$ over random variable set 
 ${\bf X}$. 
 
Condition (\ref{cond-ind}) is called the pairwise property. By \cite{KollerAndFriedman2010}, it is equivalent to the following {\it global property} for positive density.  Let ${\bf Y}, {\bf Z} \subseteq \bx$  be two disjoint subsets and ${\bf S} \subseteq \bx $. 
If $\bf S$ separates $\bf Y$ and $\bf Z$ in the graph $G$ (i.e., removing $\bf S$ disconnects $\bf Y$ and $\bf Z$ in the graph), then 
\begin{equation}
\label{global}
	P_G({\bf Y}, {\bf Z} |{\bf S}) = P_G({\bf Y} |{\bf S})  P_G({\bf Z} |{\bf S}) 
\end{equation}
For the convenience of our discussion, we will assume our Markov networks to have the global property.

Equation~(\ref{global}) states that random variables in $\bf Y$ and in $\bf Z$ are independent conditional on variables in ${\bf S}$. The minimum cardinality of $S$ characterizes the degree of condition for the independency between ${\bf Y}$ and ${\bf Z}$. The conditional independency of ${\bf X}$ is thus characterized by the upper bound of minimum cardinalities of all such $\bf S$ separating the graph $G$. This number is closely related to the notion of {\it tree width} \cite{RobertsonAndSeymour1986}. In this paper, we are interested in Markov networks whose topology graphs are of tree width $\leq k$, for given integer $k\geq 1$ \cite{KargerAndSrebro2001}.
 

\subsection{$k$-Tree and Creation Order}
\label{ktree-section}

Intuitively, tree width \cite{Bodlaender2006} is a metric for how much a graph is tree-like. It plays one of the central roles in the development of structural graph theory and, in particular, in the well known graph minor theorems \cite{RobertsonAndSeymour1986} by Robertson and Seymour. There are other alternative definitions for tree width. For the work presented in this paper, we relate its definition to the concept of $k$-tree.

\begin{definition}
\label{def_ktree}
 \rm
\cite{Patil1986} Let $k\geq 1$ be an integer. The class of {\it $k$-trees} are graphs that are defined inductively as follows:
\vspace{0mm}
\begin{enumerate}
\vspace{0mm}
\item A $k$-tree of $k$ vertices is a clique of $k$ vertices;
\vspace{-1mm}
\item A $k$-tree $G$ of $n$ vertices, for $n > k$, is a graph consisting of a $k$-tree $H$ of $n-1$ vertices and a new vertex $v$ not in $H$, such that $v$ forms a $(k+1)$-clique with some $k$-clique already in $H$.
\end{enumerate}
\end{definition}

\begin{definition}
\rm 
\cite{Bodlaender2006} For every fixed $k\geq 1$, a graph is said to be of {\it tree width} $\leq k$ if and only if it is a subgraph of some $k$-tree. 
\end{definition}


In particular, all trees are $k$-trees for $k=1$ and they have tree width 1. Graphs of tree width 1 are simply forests. 
For $k\geq 1$, a $k$-tree can be succinctly represented with the vertices in the order in which they are introduced to the graph by the inductive process in Definition 1. 

\begin{definition} \rm
Let $k\geq 1$ be an integer.
A {\it creation order} \, $\Phi_{\bx}$ for a $k$-tree $G=(\bx, E)$, where 
$\bx=\{X_1, \dots, X_n\}$, is  defined  as an {\it ordered sequence}, inductively 
\begin{equation}
\label{creation_order}
\Phi_{\bx} = \begin{cases} \bc \hspace{20mm} & \,  |\bx| =k, \mbox{ where } \bc = \bx\\
\bc X & \, |\bx| = k+1, \mbox{ where } \bc \cup \{X\} = \bx\\
 \Phi_{\bx \backslash\{X\}}\, \bc X & \, |\bx| >k+1, \mbox{ where }
 X \not \in \bc \subset \bx\\
\end{cases}
\end{equation}
where $\bc$ is a $k$-clique and $\bc \cup \{X\}$ is a $(k+1)$-clique. 
\end{definition}

Note that creation orders may not be unique for the same $k$-tree. The succinct notation facilitates discussion of properties for $k$-trees. In particular, because Markov networks are non-directed graphs, the notion of creation order offers a convenient mechanism to discuss dependency between random variables modeled by the networks. 

\begin{proposition}
\label{prefix}
Let $\Phi_{\bx}$ be a creation order for $k$-tree $G=(\bx, E)$. Then $\kappa$ is a $(k+1)$-clique in $G$ if and only if there are $X\in \kappa$ and ${\bf Y} \subseteq \bx\backslash\{X\}$ such that $\Phi_{\bf Y} \bc X$ is a prefix of $\Phi_{\bx}$, where $\bc=\kappa\backslash\{X\}$.
\end{proposition}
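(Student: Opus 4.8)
The plan is to prove both directions of the equivalence, reading a creation order as a vertex-by-vertex building process in which every step after the initial $k$-clique introduces a single vertex $X$ together with the $k$-clique $\bc$ it attaches to, thereby creating the $(k+1)$-clique $\bc\cup\{X\}$. The key structural fact I would isolate first is an edge-timing observation: if $\{X,w\}$ is an edge of $G$ and $w$ is introduced earlier than $X$ in the fixed creation order, then this edge can only have been created at the step that introduces $X$, which by the recursive definition \eqref{creation_order} forces $w$ to lie in the attaching clique $\bc$ of $X$. This holds because the only edges added when $X$ appears join $X$ to the members of $\bc$, and $X$ subsequently acquires edges only to vertices introduced after it.

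The easy direction is $(\Leftarrow)$. If $\Phi_{\bf Y}\,\bc\,X$ is a prefix of $\Phi_{\bx}$ with $\bc=\kappa\setminus\{X\}$, then by the definition of creation order the $k$-tree on ${\bf Y}$ is built first and $X$ is then attached to the $k$-clique $\bc\subseteq{\bf Y}$, so at that stage $\bc\cup\{X\}=\kappa$ is already a $(k+1)$-clique. Since every later step only adds vertices and edges and never deletes them, $\kappa$ remains a $(k+1)$-clique in $G$, as required.

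For the harder direction $(\Rightarrow)$, suppose $\kappa$ is a $(k+1)$-clique. Because $|\kappa|=k+1$ exceeds the size of the initial clique, $\kappa$ cannot lie entirely within the first $k$ vertices, so at least one of its vertices is introduced after the base step. Let $X$ be the vertex of $\kappa$ introduced last in the fixed creation order, and let $\bc$ be the $k$-clique to which $X$ attaches at its introduction. For every other $w\in\kappa$ the edge $\{X,w\}$ belongs to $E$ since $\kappa$ is a clique, and because $w$ precedes $X$ the edge-timing observation forces $w\in\bc$; hence $\kappa\setminus\{X\}\subseteq\bc$, and since both sets have size $k$ they coincide, giving $\bc=\kappa\setminus\{X\}$. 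Taking ${\bf Y}$ to be the set of vertices introduced before $X$, the recursive definition shows the initial segment of $\Phi_{\bx}$ up to and including $X$'s introduction is exactly $\Phi_{\bf Y}\,\bc\,X$, the desired prefix.

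The main obstacle I anticipate is the forward direction, and within it the need to pin down that the attaching clique of the last-introduced vertex of $\kappa$ is precisely $\kappa\setminus\{X\}$, not merely some $k$-clique overlapping $\kappa$. The edge-timing observation is what makes this rigorous: it rules out the possibility that an edge of $\kappa$ incident to $X$ was created at a later step, which cannot happen because after its own introduction $X$ gains new incident edges only to vertices introduced still later, and none of those lie in $\kappa$ by the choice of $X$. I would therefore state and justify this observation explicitly before handling either direction so that it can be invoked cleanly in both.
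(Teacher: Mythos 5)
The paper states Proposition~\ref{prefix} without giving any proof, so there is nothing to compare against; judged on its own, your argument is correct and is the natural one. The backward direction is immediate from the inductive construction, and your forward direction---choosing $X$ as the last-introduced vertex of $\kappa$ and using the edge-timing observation that any edge from $X$ to an earlier vertex must have been created at $X$'s introduction step, hence lands in $X$'s attaching clique $\bc$, with the cardinality count $|\kappa\setminus\{X\}|=|\bc|=k$ forcing equality---is exactly the step that needs to be made explicit, and you justify it cleanly. Your proof fills a gap the paper leaves open rather than duplicating or diverging from an existing argument.
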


\begin{definition}
\label{orientation}
\rm Let $k\geq 1$ be an integer and $\Phi_{\bx}$
be a creation order of some $k$-tree $G=(\bx, {E})$, where $\bx=\{X_1, \dots, X_n\}$.
Then $\Phi_{\bx}$ {\it induces an acyclic orientation} $\hat{E}$ for $E$ such that given any two different variables $X_i, X_j \in \bx$, 
\begin{equation}
\label{parentcases}
 \langle X_i, X_j\rangle \in \hat{E} \, \mbox{ if and only if }  \,
 \begin{cases} 
 X_i, X_j \in \bc, \,  i < j, \, \mbox{ and } \, \bc \mbox{ is a prefix of } \Phi_{\bx}\\
 X_i \in \bc \mbox{ and } \Phi_{\bf Y} \bc X_j \mbox{ is a prefix of } \Phi_{\bx}, \mbox{ for some } {\bf Y}
 \end{cases}
\end{equation}
In addition, for any variable $X \in \bx$, its {\it parent set} is defined as $\pi_{\hat{E}}(X) = \{ X_i: \,  \langle X_i, X\rangle \in \hat{E}\}$
\end{definition}
Based on Definition~\ref{orientation}, it is clear that
the orientation induced by any creation order of $k$-tree is acyclic.
In particular, there exists exactly one variable $X$ with $\pi_{\hat{E}}(X) = \emptyset$.

\subsection{Relative Entropy and Mutual Information}

According to Shannon's information theory \cite{Shannon1948}, 
the uncertainty of a discrete random variable $X$ can be measured with {\it entropy} $H(X)=   - \sum\limits_{x} P(X)\log_2 P(X)$, where $P(X)$ is the probability distribution for $X$ and the sum takes all values $x$ in the range of $X$. Entropy for a set of random variables $\bx$ is  $H(\bx)=   - \sum\limits_{\bf{x}} P(\bx)\log_2 P(\bx)$, where the sum takes all combined values $\bf x$ in the ranges of variables in $\bx$.
\begin{definition}
\rm \cite{KullbackAndLeibler1951}
Kullback-Leibler {\it divergence} between two probability distributions $P(\bx)$ and $Q(\bx)$ of the same random variable set $\bx$ is defined as
\begin{equation}
\label{KL}
D_{KL}(P\parallel Q) = \sum_{{\bf x}} P(\bfx) \log \frac{P(\bfx)}{Q(\bfx)} \geq 0
\end{equation}
where the sum takes combined values $\bf x$ in the ranges of all random variables in $\bx$. 
\end{definition}
The last equality holds if and only if $P(\bfx) = Q(\bfx)$. 
$D_{KL}(P\parallel Q) $ can be used to measure the information gain by distribution $P$ over distribution $Q$, or information loss by approximation of $P$ with $Q$ as in this work. 

Let $X$ and $Y$ are two random variables. The {\it mutual information} between $X$ and $Y$, denoted as $I(X; Y)$, is defined as the Kullback-Leibler divergence between their
joint probability distribution and the product of marginal distributions, i.e., 
\[ I(X; Y) = D_{KL}(p(X, Y)\parallel p(X)p(Y)) = \sum_{x,y} p(X, Y) \log \frac{p(X,Y)}{p(X)p(Y)}\]
Mutual information $I(X; Y)$ measures the degree of the correlation between the two random variables. 
In this work, we slightly extends the mutual information to include more than two variables. 
\begin{definition}
\rm
Let $X$ be a random variable and $\bc$ be a set of random variables. Then 
\[ I(X; \bc) = D_{KL}(p(X, \bc)\parallel p(X)p(\bc)) = \sum_{x, {\bf c}} p(X, \bc) \log \frac{p(X,\bc)}{p(X)p(\bc)}\]
where sum takes combined values ${x}$ and ${\bf c}$ in the ranges of $X$ and all random variables in $\bc$. 
\end{definition} 
Note that $I(X; \bc)$ is not the same as the multivariate mutual information defined elsewhere \cite{TimmeEtAl2014}. We further point out that mutual information has recently received much attention due to its capability in decoding hidden information from big data \cite{KinneyAndAtwal2014,ZhaoEtAl2015}. In our derivation of optimal Markov $k$-tree modeling, estimating mutual information stands out as a critical condition that needs to be satisfied (see next section). These phenomena are unlikely coincidences.

\subsection{Markov $k$-Tree Model}

\begin{definition}
\label{def_markov_ktree}
\rm Let $k\geq 1$ be an integer. A {\it Markov $k$-tree} is a Markov network over $n$ random variables $\bx =\{X_1, \dots, X_n\}$ with a topology graph $G=(\bx, E)$ that is a $k$-tree. We denote with $P_G(\bx)$ the joint probability distribution function of the Markov $k$-tree. \end{definition}

\vspace{2mm}

\begin{theorem} Let $k\geq 1$ and ${\bf X} =\{X_1, \dots, X_n\}$ be random variables with joint probability distribution function $P(\bx)$. Let $G=(\bx, E)$ be a Markov $k$-tree and 
$\hat{E}$ be an acyclic orientation for its edges. 
Then 
\begin{equation}
\label{joint_prob}
{P_G(\bfx)  =  \prod_{i=1}^{n} P(X_i | \pi_{\hat{E}}(X_i))} 
\end{equation}
\end{theorem}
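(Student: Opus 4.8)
The plan is to induct on the number of variables $n$, peeling off the most recently created vertex in the creation order $\Phi_{\bx}$. For the base case $n=k$ the topology graph is a single $k$-clique; the induced orientation of Definition~\ref{orientation} orders these vertices by index, so $\pi_{\hat E}(X_i)=\{X_1,\dots,X_{i-1}\}$, and the claimed identity is just the chain rule of probability applied to $P_G$, with each conditional equal to the corresponding conditional of $P$ because $P_G$ and $P$ share the marginal on this clique. No conditional independence is invoked here, which is exactly right since a complete graph encodes none.

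For the inductive step I would write $\Phi_{\bx}=\Phi_{\bx\setminus\{X\}}\,\bc X$, so that $X$ is the last-created vertex, $\bc=\pi_{\hat E}(X)$ is the $k$-clique it was attached to, and $H=G[\bx\setminus\{X\}]$ is a $k$-tree on $n-1$ vertices. By Definition~\ref{def_ktree} the only neighbours of $X$ in $G$ are the $k$ vertices of $\bc$, so $\bc$ separates $X$ from ${\bf Y}=\bx\setminus(\{X\}\cup\bc)$. The global Markov property~(\ref{global}) then gives $P_G(X\mid\bc,{\bf Y})=P_G(X\mid\bc)$, and combining this with the chain rule yields $P_G(\bx)=P_G(X\mid \bx\setminus\{X\})\,P_G(\bx\setminus\{X\})=P_G(X\mid\bc)\,P_G(\bx\setminus\{X\})$.

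The remaining factor $P_G(\bx\setminus\{X\})$ is the marginal obtained by summing out a vertex whose neighbourhood is a clique; this marginal is a Markov $k$-tree distribution with topology $H$, and its creation order $\Phi_{\bx\setminus\{X\}}$ is a prefix of $\Phi_{\bx}$, so by Proposition~\ref{prefix} and Definition~\ref{orientation} the parent set $\pi_{\hat E}(X_i)$ of every $X_i\neq X$ is unchanged. The induction hypothesis then gives $P_G(\bx\setminus\{X\})=\prod_{X_i\neq X}P(X_i\mid\pi_{\hat E}(X_i))$. Finally, since $\{X\}\cup\bc$ is a $(k+1)$-clique of $G$ and $P_G$ agrees with $P$ on clique marginals, $P_G(X\mid\bc)=P(X\mid\pi_{\hat E}(X))$, and multiplying the two pieces produces the product over all $i$.

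I expect the main obstacle to be the careful bookkeeping between the two distributions $P$ and $P_G$. The factorization itself is a clean consequence of the separator structure of $k$-trees together with the global property~(\ref{global}); the delicate point is identifying each factor $P_G(X_i\mid\pi_{\hat E}(X_i))$ with the \emph{true} conditional $P(X_i\mid\pi_{\hat E}(X_i))$, which hinges on the defining assumption that the Markov $k$-tree approximation reproduces $P$ on every clique $\{X_i\}\cup\pi_{\hat E}(X_i)$. A secondary technical point is checking that marginalizing out the last-created vertex preserves both the $k$-tree topology $H$ and the clique-marginal agreement, so that the induction hypothesis genuinely applies; both follow from $X$ being simplicial with neighbourhood $\bc$.
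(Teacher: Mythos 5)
Your argument is correct and is essentially the paper's own proof in different clothing: the paper's recurrence (\ref{recurrence}) is exactly your induction that peels off the last-created vertex, both arguments use the chain rule together with the global property (\ref{global}) applied to the separator $\bc_{n-1}$, and both finish the base $k$-clique with the chain rule. The only real difference is that you are more explicit about why each $P_G$-conditional may be identified with the corresponding $P$-conditional, a bookkeeping point the paper's derivation glosses over by silently interchanging $P$ and $P_G$.
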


\begin{proof}

Assume that the acyclic orientation $\hat{E}$ is induced by creation order 
\[\Phi_{\bx} = \bc_{k} X_{k+1} \dots \bc_{n-1} X_{n}\] where $\bc_{k} = \{ X_1, \dots, X_k\}$
\begin{equation}
\label{PG}
\begin{split}
 P_G(\bx) & = P(\bx | \Phi_{\bx}) \\
 & = P(X_1, \dots, X_n | \Phi_{\bx}) \\
 & = P(X_n | X_1, \dots, X_{n-1}, \Phi_{\bx})  P(X_1, \dots, X_{n-1} | \Phi_{\bx}) \\
 & = P(X_n | C_{n-1}) P(\bx\backslash\{X_n\}) | \Phi_{\bx \backslash X_n})
 \end{split}
 \end{equation}
The last equality holds for the reason that $X_n$ is independent from variables in $\bx\backslash \bc_{n-1}$ conditional
on variables in $\bc_{n-1}$, as shown in the following derivation. Assume ${\bf Y}=\bx \backslash (\bc_{n-1} \cup \{X_n\})$. Use the Bayes, 
\[ P(X_n | X_1, \dots, X_{n-1}, \Phi_{\bx}) = P(X_n | {\bf Y}, \bc_{n-1}, \Phi_{\bx})
=\frac{P(X_n, {\bf Y} | \bc_{n-1}, \Phi_{\bx})} { P({\bf Y} | \bc_{n-1}, \Phi_{\bx})}\]
\[ 
=\frac{P(X_n | \bc_{n-1}, \Phi_{\bx}) P({\bf Y} |\bc_{n-1}, \Phi_{\bx}) } { P({\bf Y} | \bc_{n-1}, \Phi_{\bx})} = P(X_n | \bc_{n-1}, \Phi_{\bx}) 
= P(X_n | \bc_{n-1})\]

 The derivation in (\ref{PG}) also results in recurrence
\begin{equation}
\label{recurrence}
 P(\bx | \Phi_{\bx}) =
 \begin{cases} 
 P(X_n | \bc_{n-1}) P(\bx\backslash \{X_n\} | \Phi_{\bx\backslash \{X\}}) & \mbox{ if } n > k\\
  P(\bc_{k})  & \mbox{ if } n = k
 \end{cases}
 \end{equation}
  Solving the recurrence yields  
 \begin{equation}
 \label{creation-order-prob}
 P(\bx | \Phi_{\bx}) = P(\bc_{k}) \prod\limits_{i=k+1}^n P(X_{i-1} | \bc_{i})
 \end{equation}
where $P(\bc_{k}) = P(X_1, \dots, X_k)$. Because for $i\leq k$, $\pi_{\hat{E}}(X_i) = \{X_1, \dots, X_{i-1}\}$, it is not hard to prove that 
\begin{equation}
\label{ck}
P(X_1, \dots, X_k) = \prod\limits_{i=1}^k P(X_i | \pi_{\hat{E}}(X))
\end{equation} 

In addition, because 
$\pi_{\hat{E}}(X_i) = \bc_{i-1}$ for all $i\geq k+1$, by equations 
(\ref{creation-order-prob}) and (\ref{ck}), we have 
\[P_G(\bx) = \prod_{i=1}^k P(X_i | \pi_{\hat{E}}(X_i))  \prod_{i=k+1}^n P(X_i | \pi_{\hat{E}}(X_i)) = \prod_{i=1}^n P(X_i | \pi_{\hat{E}}(X_i))\]
\end{proof}
Though the above proof is based on an explicit creation order for the $k$-tree $G$. The probability function computed for $P_G(\bx)$  is actually independent of the choice of a creation order as demonstrated in the following.

\begin{theorem}
\label{creation-order-independent}
The probability function expressed in {\rm (\ref{joint_prob})} for Markov $k$-tree $G$ remains the same regardless the choice of creation order for $G$.
\end{theorem}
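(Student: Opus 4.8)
The plan is to show that the oriented product in~(\ref{joint_prob}), which a priori depends on $\hat{E}$ and hence on the chosen creation order, is in fact equal to a symmetric ratio of clique marginals that is manifestly determined by $G$ alone. Fixing a creation order $\Phi_{\bx}=\bc_k X_{k+1}\cdots \bc_{n-1}X_n$, I would first reuse the parent-set identities behind the preceding theorem, namely $\pi_{\hat{E}}(X_i)=\bc_{i-1}$ for $i\geq k+1$ together with~(\ref{ck}), to recover $P_G(\bx)=P(\bc_k)\prod_{i=k+1}^{n}P(X_i\mid \bc_{i-1})$. Writing each conditional as a ratio $P(X_i\mid \bc_{i-1})=P({\bf K}_i)/P(\bc_{i-1})$, where ${\bf K}_i=\bc_{i-1}\cup\{X_i\}$ is the $(k+1)$-clique created at step $i$, and cancelling the leading factor $P(\bc_k)$ against the $i=k+1$ denominator $P(\bc_k)$, gives
\[
P_G(\bx)=\frac{\prod_{i=k+1}^{n}P({\bf K}_i)}{\prod_{i=k+2}^{n}P(\bc_{i-1})}.
\]

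The next step is to recognize both factors as graph invariants. For the numerator I would show that $\{{\bf K}_{k+1},\dots,{\bf K}_n\}$ is exactly the set of maximal cliques of $G$: a newly introduced vertex is simplicial at the moment of its addition, so the $k$ neighbours of the last-added vertex of any $(k+1)$-clique must coincide with its attachment clique, whence every $(k+1)$-clique is created exactly once. There are therefore precisely $n-k$ of them and they are the maximal cliques of the $k$-tree, so $\prod_{i}P({\bf K}_i)=\prod_{\text{maximal }{\bf K}}P({\bf X}_{\bf K})$ is independent of the creation order.

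The denominator is the crux. I would read the attachment cliques $\bc_{i-1}$ (for $i\geq k+2$) as the edge labels of the clique tree $T$ obtained by joining each ${\bf K}_i$ to an earlier maximal clique containing $\bc_{i-1}$. Since $T$ satisfies the running-intersection property, the maximal cliques containing any fixed $k$-clique $\bc$ induce a subtree, and every edge of that subtree is labelled precisely $\bc$ (two distinct $(k+1)$-cliques meet in at most $k$ vertices). Hence $\bc$ occurs among the separators exactly $d(\bc)-1$ times, where $d(\bc)$ is the number of maximal cliques containing $\bc$, a quantity fixed by $G$; therefore $\prod_{i=k+2}^{n}P(\bc_{i-1})=\prod_{k\text{-clique }\bc}P({\bf X}_{\bc})^{\,d(\bc)-1}$ is also creation-order independent, and combining the two displays shows that $P_G(\bx)$ depends only on $G$. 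The main obstacle I anticipate is exactly this invariance of the separator multiset: verifying the running-intersection property of the clique tree arising from a creation order, together with the subtree/multiplicity count, is the one place where genuine chordal-graph structure rather than bookkeeping is required.
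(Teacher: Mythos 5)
Your proposal is correct, but it takes a genuinely different route from the paper's. The paper proves Theorem~\ref{creation-order-independent} by induction on $n$: it fixes two creation orders $\Psi_{\bx}$ and $\Phi_{\bx}$, deletes the vertex $X_n$ that $\Phi_{\bx}$ introduces last, observes that the remainder is a $k$-tree with creation orders obtained by excising $X_n$ from each order, and then splits into two cases according to whether $X_n$ lies in the initial $k$-clique of $\Psi_{\bx}$ or is introduced at some later step $m$; in both cases the factor $P(X_n\mid \bc_{n-1})$ is peeled off and the induction hypothesis applied. Your argument instead derives the order-free closed form
\[
P_G(\bx)=\frac{\prod_{\kappa}P(\bx_{\kappa})}{\prod_{\bc}P(\bx_{\bc})^{\,d(\bc)-1}},
\]
the product running over $(k+1)$-cliques $\kappa$ and $k$-cliques $\bc$ of $G$, and shows separately that the numerator (via the bijection between creation steps and $(k+1)$-cliques, which is essentially Proposition~\ref{prefix}) and the denominator (via the running-intersection property of the clique tree and the separator-multiplicity count $d(\bc)-1$) are invariants of $G$. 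The paper's induction is more elementary and self-contained --- it needs only the recurrence~(\ref{recurrence}) and bookkeeping, with no appeal to chordal-graph structure --- but it yields only the bare invariance statement. Your route requires the one nontrivial structural fact you correctly flag (that a clique tree built by attaching each new $(k+1)$-clique to an earlier one through its $k$-clique separator satisfies the running-intersection property, so the cliques containing a fixed $\bc$ span a subtree all of whose edges are labelled $\bc$); in exchange it produces the junction-tree factorization of $P_G$ explicitly, which is stronger than the theorem and connects directly to the clique-factorization and tree-decomposition discussion of Section 4.2 (Proposition~\ref{td}). Both proofs are sound; yours proves more at the cost of importing one standard lemma about chordal graphs that you sketch but do not fully verify.
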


A proof is given in Appendix A (Theorem~\ref{appendix}).



\section{Model Optimization}

\subsection{Optimal Markov $k$-Trees}

Let $\bx =\{X_1, \dots, X_n\}$ be random variables in any 
unconstrained probabilistic model $P(\bx)$. Let $k\geq 1$ and $k$-tree $G=(\bx, E)$ be a topology graph of a Markov network $P_G(\bx)$ that approximates  
$P(\bx)$. The approximation can be measured using the Kullback-Leibler divergence between two models \cite{KullbackAndLeibler1951,Lewis1959}:
\begin{equation}
\small
\label{KL_divergence}
 D_{KL}(P\parallel P_G) = \sum_{x} P(\bfx) \log \frac{P(\bfx)}{P_G(\bfx)} \geq 0
\end{equation}
where $x$ is any combination of values for all random variables in $\bx$, and the last equality holds if and only if $P(\bfx) = P_G(\bfx)$. 

Thus the problem to optimally approximate 
$P(\bx)$ is to find a Markov $k$-tree with topology $G$ that minimizes the divergence (\ref{KL_divergence}). 
We are now ready for our first main theorem.

\begin{theorem} 
\label{min-DKL}
$D_{KL}(P\parallel P_G)$ is minimized when the topology $k$-tree
$G=(\bx, E)$ for random variables $\bx =\{X_1, \dots, X_n\}$ is such 
that maximizes the sum of mutual information.
\begin{equation}
\label{max-delta}
\sum\limits_{i=1}^n I(X_i, \pi_{\hat{E}}(X_i)) 
\end{equation}
where $\hat{E}$ is any acyclic orientation for the edges in $G$.
\end{theorem}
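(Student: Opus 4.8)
The plan is to follow the Chow--Liu decomposition, but with the $k$-tree factorization of Theorem~1 in place of the tree factorization. First I would expand the divergence as
\begin{equation*}
D_{KL}(P \parallel P_G) = \sum_{\bfx} P(\bfx)\log P(\bfx) - \sum_{\bfx} P(\bfx)\log P_G(\bfx) = -H(\bx) - \sum_{\bfx} P(\bfx)\log P_G(\bfx),
\end{equation*}
and observe that the joint entropy $H(\bx)$ depends only on the true distribution $P$, not on the choice of topology $G$. Hence minimizing $D_{KL}(P \parallel P_G)$ over all $k$-trees $G$ is equivalent to maximizing the single term $\sum_{\bfx} P(\bfx)\log P_G(\bfx)$.

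The key step is to rewrite this term using the factorization $P_G(\bfx) = \prod_{i=1}^n P(X_i \mid \pi_{\hat{E}}(X_i))$ from Theorem~1. Taking the logarithm turns the product into a sum, and interchanging the two summations gives $\sum_{i=1}^n \sum_{\bfx} P(\bfx)\log P(X_i \mid \pi_{\hat{E}}(X_i))$. For each fixed $i$, the summand depends on $\bfx$ only through $X_i$ and its parent set $\pi_{\hat{E}}(X_i)$, so marginalizing over the remaining variables collapses the inner sum to $\sum P(X_i, \pi_{\hat{E}}(X_i)) \log P(X_i \mid \pi_{\hat{E}}(X_i))$ taken over the joint range of $X_i$ and $\pi_{\hat{E}}(X_i)$. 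This marginalization is the one routine bookkeeping step; the cases $i \le k$ (where $\pi_{\hat{E}}(X_i) = \{X_1,\dots,X_{i-1}\}$) and the unique root with $\pi_{\hat{E}}(X_i)=\emptyset$ need to be checked for consistency, the latter contributing $I(X_i;\emptyset)=0$.

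Finally I would split each conditional log by writing $\log P(X_i \mid \pi_{\hat{E}}(X_i)) = \log \frac{P(X_i,\pi_{\hat{E}}(X_i))}{P(X_i)P(\pi_{\hat{E}}(X_i))} + \log P(X_i)$. Summing the first piece against $P(X_i,\pi_{\hat{E}}(X_i))$ reproduces exactly the extended mutual information $I(X_i;\pi_{\hat{E}}(X_i))$ of the definition, while the second piece collapses to the marginal entropy $-H(X_i)$. Therefore
\begin{equation*}
\sum_{\bfx} P(\bfx)\log P_G(\bfx) = \sum_{i=1}^n I(X_i;\pi_{\hat{E}}(X_i)) - \sum_{i=1}^n H(X_i),
\end{equation*}
and combining with the first display yields
\begin{equation*}
D_{KL}(P \parallel P_G) = -H(\bx) + \sum_{i=1}^n H(X_i) - \sum_{i=1}^n I(X_i;\pi_{\hat{E}}(X_i)).
\end{equation*}
Since both $H(\bx)$ and the sum of marginal entropies $\sum_{i=1}^n H(X_i)$ are independent of $G$, the divergence is minimized precisely when $\sum_{i=1}^n I(X_i;\pi_{\hat{E}}(X_i))$ is maximized, which is the claim. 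The main conceptual obstacle is not any single calculation but ensuring that the per-variable decomposition is legitimate independent of the orientation $\hat{E}$: this rests on Theorem~\ref{creation-order-independent}, which guarantees $P_G$ is well defined, so that the objective $\sum_{i=1}^n I(X_i;\pi_{\hat{E}}(X_i))$ may be treated as a property of the undirected $k$-tree $G$ rather than of a particular creation order.
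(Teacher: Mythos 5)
Your proposal is correct and follows essentially the same route as the paper's own proof: apply the factorization $P_G(\bfx)=\prod_i P(X_i\mid\pi_{\hat{E}}(X_i))$, split each conditional log into the mutual-information ratio plus $\log P(X_i)$, marginalize to obtain $-\sum_i I(X_i;\pi_{\hat{E}}(X_i))+\sum_i H(X_i)-H(\bx)$, and note that the entropy terms do not depend on $G$. Your added remarks on the root term $I(X_i;\emptyset)=0$ and on invoking Theorem~\ref{creation-order-independent} for orientation-independence are consistent with how the paper handles these points immediately after its proof.
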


\begin{proof} 

Assume $\hat{E}$ to be an acyclic orientation for the edges in $G$. 
Apply equation (\ref{joint_prob}) 
\[P_G(\bfx)  =  \prod_{i=1}^{n} P(X_i | \pi_{\hat{E}}(X_i))\]
to $P_G(\bx)$ in (\ref{KL_divergence}), we have
\[  D_{KL}(P\parallel P_G) = \sum\limits_{x} P(\bfx) \log P(\bfx) - \sum_{x} P(\bfx) \sum_{i=i}^n \log P(X_i | \pi_{\hat{E}}(X_i))
\]
The first term on the right hand side (RHS) of above is  $-H(\bfx)$, where $H(\bx)$ is the entropy of \bfx. And the second term 
$- \sum\limits_{x} P(\bfx) \sum\limits_{i=1}^n \log P(X_i | \pi_{\hat{E}}(X_i))$ can be further explained with 
\begin{equation}
\label{more_terms}
 - \sum_{x}  P(\bfx) \sum_{i=1}^n \log \frac{P(X_i, \pi_{\hat{E}}(X_i))}{P(X_i) P(\pi_{\hat{E}}(X_i))} - \sum_{x}  P(\bfx) \sum_{i=1}^n \log P(X_i) 
\end{equation}
Because $P(X_i)$ and $P(X_i, \pi_{\hat{E}}(X_i))$ are components of $P(\bfx)$, the second term in (\ref{more_terms}) can be formulated as
\begin{equation}
\begin{split} - \sum_{x}  P(\bfx) \sum_{i=1}^n \log P(X_i)
 & =  \sum_{i=1}^n \sum_{x}  P(\bfx)  \log P(X_i) \\
 & = \sum_{i=1}^n \sum_{x_i} P(X_i) \log P(X_i) \\
 & = \sum\limits_{i=1}^n H(X_i)\\
\end{split}
 \end{equation}
where $x_i$ is any value in the range of random variable $X_i$.

Also the first term in (\ref{more_terms}) gives 
\begin{equation}
\begin{split}
 -\sum_{i=1}^n \sum_{x} P(\bfx) \log \frac{P(X_i, \pi_{\hat{E}}(X_i))}{P(X_i) P(\pi_{\hat{E}}(X_i))} 
 & =  - \sum_{i=1}^n \sum_{x_i, y_i} P(X_i, \pi_{\hat{E}}(X_i)) \log \frac{P(X_i, \pi_{\hat{E}}(X_i))}{P(X_i) P(\pi_{\hat{E}}(X_i))} \\
 &  =  - \sum_{i=1}^n I(X_i, \pi_{\hat{E}}(X_i))
 \end{split}
 \end{equation}
where $y_i$ is any combination of values of all random variables in $\pi_{\hat{E}}(X_i)$ and $I$ is the mutual information between variable $X_i$ and its parent set $\pi_{\hat{E}}(X_i)$.

Therefore, 
\begin{equation}
\label{opt}
 D_{KL}(P\parallel P_G) 
= - \sum_{i=1}^n I(X_i, \pi_{\hat{E}}(X_i)) + \sum\limits_{i=1}^n H(X_i) - H(\bfx) 
\end{equation}

Since $\sum\limits_{i=1}^nH(X_i)$ and $H(\bfx)$ are independent of the choice $G$ (and the acyclic 
orientation for the edges), $D_{KL}(P\parallel P_G)$ is minimized when $\sum\limits_{i=1}^n I(X_i, \pi_{\hat{E}}(X_i))$ is maximized.
\end{proof}

Though Theorems~\ref{min-DKL} is about optimal approximation of probabilistic systems with Markov $k$-trees, they also characterize the optimal Markov $k$-tree. 
\begin{theorem} 
\label{opt-markov-ktree}
Let $k\geq 1$ and $\bx=\{X_1, \dots, X_n\}$ be a set of random variables. The optimal Markov $k$-tree model for $\bx$ is the one with topology $G=(\bx, E)$ that maximize $\sum\limits_{i=1}^n I(X_i, \pi_{\hat{E}}(X_i))$, for some acyclic orientation of edges in $E$.
\end{theorem}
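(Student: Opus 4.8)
The plan is to treat this statement as a direct repackaging of Theorem~\ref{min-DKL}. By an \emph{optimal} Markov $k$-tree model I mean one that incurs the minimum information loss in approximating the underlying distribution, i.e.\ the topology $G$ minimizing $D_{KL}(P \parallel P_G)$ among all $k$-tree topologies on $\bx$. I would therefore simply invoke the closed-form identity (\ref{opt}) established inside the proof of Theorem~\ref{min-DKL},
\[
D_{KL}(P \parallel P_G) = -\sum_{i=1}^n I(X_i, \pi_{\hat{E}}(X_i)) + \sum_{i=1}^n H(X_i) - H(\bx),
\]
and read off the characterization from it.

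First I would observe that the two entropy terms $\sum_{i=1}^n H(X_i)$ and $H(\bx)$ are functionals of the target distribution $P$ alone --- the former of its univariate marginals, the latter of the full joint --- and hence are constant across every candidate topology $G$ and every choice of acyclic orientation $\hat{E}$. Consequently, minimizing the left-hand side over all $k$-tree topologies is equivalent to maximizing $\sum_{i=1}^n I(X_i, \pi_{\hat{E}}(X_i))$, which is exactly the claimed description of the optimal model. No new estimate is needed here; the substantive computation was already discharged in deriving (\ref{opt}).

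The one point that genuinely needs care --- and the step I expect to be the main obstacle, modest as it is --- is the phrase ``for some acyclic orientation.'' Because the objective $\sum_{i=1}^n I(X_i, \pi_{\hat{E}}(X_i))$ is written in terms of a particular orientation $\hat{E}$, I must confirm that the maximization is well posed, i.e.\ that the sum does not in fact depend on which acyclic orientation of $G$ is chosen. I would derive this from Theorem~\ref{creation-order-independent}: that theorem guarantees $P_G(\bx)$ is the same for every creation order, hence for every induced acyclic orientation, so $D_{KL}(P \parallel P_G)$ is itself orientation-independent. Substituting this invariance back into identity (\ref{opt}), and using that the entropy terms are already orientation-free, forces $\sum_{i=1}^n I(X_i, \pi_{\hat{E}}(X_i))$ to take one and the same value for every acyclic orientation of a fixed $G$. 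Thus ``some'' may be read as ``any,'' the objective is well defined as a function of the topology alone, and the characterization of the optimal Markov $k$-tree follows immediately.
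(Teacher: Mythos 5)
Your argument is correct and is essentially the paper's own: the paper presents this theorem as an immediate corollary of Theorem~\ref{min-DKL} via identity (\ref{opt}), with the constancy of $\sum_{i}H(X_i)$ and $H(\bx)$ doing the work, and it justifies orientation-independence of $\sum_{i} I(X_i,\pi_{\hat{E}}(X_i))$ exactly as you do, by combining Theorem~\ref{creation-order-independent} with the proof of Theorem~\ref{min-DKL}. No substantive difference from the paper's route.
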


Now we let 
\[
 \Delta_{G,\hat{E}} (\bx) = \sum\limits_{i=1}^n I(X_i, \pi_{\hat{E}}(X_i))
 \]
By Theorem~\ref{creation-order-independent} and the proof of Theorem~\ref{min-DKL}, we know that 
$\Delta_{G,\hat{E}}$ is invariant of the choice of an acyclic orientation $\hat{E}$ of edges. So we can simply omit $\hat{E}$ and use $\Delta_{G} (\bx)$ for $\Delta_{G,\hat{E}}(\bx)$.
\begin{definition}
\label{def-G-star}
\rm Let $\bx =\{X_1, \dots, X_n\}$ be random variables. Define 
\begin{equation}
\label{G-star}
G^* 
  = \arg \max_{G} \{ \Delta_{G}(\bx)\}
 \end{equation}
 to be the topology of optimal Markov $k$-tree over $\bx$.
\end{definition}

\begin{corollary}
Let $k\geq 1$ and 
$\bx=\{X_1, \dots, X_n\}$ be a set of random variables. Divergence $D_{KL}(P\parallel P_{G^*})$ is minimized with the topology $k$-tree ${G}^*$ satisfying {\rm (\ref{G-star})}.  
\end{corollary}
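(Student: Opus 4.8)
The plan is to derive this corollary as a direct consequence of the identity established inside the proof of Theorem~\ref{min-DKL}. First I would recall equation~(\ref{opt}), which expresses the divergence for an arbitrary $k$-tree topology $G$ as
\[ D_{KL}(P\parallel P_G) = - \sum_{i=1}^n I(X_i, \pi_{\hat{E}}(X_i)) + \sum_{i=1}^n H(X_i) - H(\bfx) = -\Delta_{G}(\bx) + \sum_{i=1}^n H(X_i) - H(\bfx). \]
The crucial observation is that the two entropy terms $\sum_{i=1}^n H(X_i)$ and $H(\bfx)$ depend only on the marginal and joint distributions of the underlying system $P(\bx)$, and not on the chosen topology $G$ nor on the acyclic orientation $\hat{E}$; they are therefore fixed constants as $G$ ranges over all $k$-tree topologies on $\bx$.

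Given this, I would argue that minimizing $D_{KL}(P\parallel P_G)$ over the family of $k$-tree topologies is equivalent to maximizing $\Delta_{G}(\bx)$ over the same family, since the two quantities differ only by an additive constant and an overall sign. At this point I would invoke the remark following Theorem~\ref{creation-order-independent}, which guarantees that $\Delta_{G}(\bx)$ is well-defined independently of the acyclic orientation $\hat{E}$, so that the objective being optimized is unambiguous. Because the collection of $k$-tree topologies over a fixed finite vertex set $\bx$ is finite, the maximum is attained, and by Definition~\ref{def-G-star} it is attained precisely at $G^*=\arg\max_{G}\{\Delta_{G}(\bx)\}$.

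Combining the two steps yields the claim: for every $k$-tree topology $G$ we have $\Delta_{G^*}(\bx) \geq \Delta_{G}(\bx)$, hence $D_{KL}(P\parallel P_{G^*}) \leq D_{KL}(P\parallel P_G)$, so the divergence is minimized at $G^*$. I do not anticipate a genuine obstacle here, as the result is essentially a restatement of Theorem~\ref{min-DKL} recast in the notation of Definition~\ref{def-G-star}; the only point requiring mild care is confirming that the $\arg\max$ in~(\ref{G-star}) is well-posed, which follows from the orientation-invariance of $\Delta_{G}$ together with the finiteness of the topology family.
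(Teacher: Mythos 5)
Your proposal is correct and follows exactly the route the paper intends: the corollary is stated without a separate proof precisely because it is the combination of equation~(\ref{opt}) (divergence equals a constant minus $\Delta_G(\bx)$) with Definition~\ref{def-G-star}, which is what you write out. Your added care about orientation-invariance and finiteness of the family of spanning $k$-trees is a reasonable, if minor, tightening of the same argument.
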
  
 
\begin{corollary} 
Let $k\geq 1$ and $\bx=\{X_1, \dots, X_n\}$ be a set of random variables. The optimal Markov $k$-tree model for $\bx$ is the one with topology graph $G^*=(\bx, E)$ satisfying {\rm (\ref{G-star})}.
\end{corollary}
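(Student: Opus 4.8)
The plan is to derive the corollary directly from Theorem~\ref{opt-markov-ktree} by substituting the abbreviated notation $\Delta_{G}(\bx)$ for the explicit sum of mutual-information terms. First I would recall that, by Theorem~\ref{opt-markov-ktree}, the optimal Markov $k$-tree model for $\bx$ is exactly the one whose topology graph $G=(\bx,E)$ maximizes $\sum_{i=1}^{n} I(X_i, \pi_{\hat{E}}(X_i))$ over all $k$-trees on the vertex set $\bx$, for some acyclic orientation $\hat{E}$ of the edges of $G$.

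Next I would observe that the objective being maximized is precisely $\Delta_{G,\hat{E}}(\bx)$. By Theorem~\ref{creation-order-independent} together with the proof of Theorem~\ref{min-DKL}, this quantity is invariant under the choice of acyclic orientation $\hat{E}$, so the objective depends only on $G$ and may be written $\Delta_{G}(\bx)$ without ambiguity. This is the step that makes the optimization well posed: the value attached to each candidate topology is a single number rather than one depending on an auxiliary orientation.

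Then, by Definition~\ref{def-G-star}, $G^{*}=\arg\max_{G}\{\Delta_{G}(\bx)\}$ is by construction a $k$-tree topology attaining the maximum of $\Delta_{G}(\bx)$; such a maximizer exists because there are only finitely many $k$-trees on the $n$ labeled vertices of $\bx$, so the supremum is in fact attained. Combining these facts, the topology maximizing $\sum_{i=1}^{n} I(X_i, \pi_{\hat{E}}(X_i))$ in Theorem~\ref{opt-markov-ktree} is exactly $G^{*}$, and therefore the optimal Markov $k$-tree model for $\bx$ is the one with topology graph $G^{*}$ satisfying (\ref{G-star}).

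The main, and essentially only, obstacle here is bookkeeping rather than mathematics: confirming that the substitution of $\Delta_{G}(\bx)$ for the explicit sum is legitimate, i.e.\ that $\Delta_{G}(\bx)$ is orientation-independent and hence a genuine function of $G$ alone. Since that invariance has already been established, the corollary follows immediately as a restatement of Theorem~\ref{opt-markov-ktree} in the $\Delta_{G}$ notation, with no further computation required.
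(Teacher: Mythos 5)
Your proposal is correct and follows essentially the same route as the paper: the corollary is an immediate restatement of Theorem~\ref{opt-markov-ktree} once the objective $\sum_{i=1}^{n} I(X_i, \pi_{\hat{E}}(X_i))$ is identified with $\Delta_{G}(\bx)$, whose orientation-independence the paper has already established via Theorem~\ref{creation-order-independent} and the proof of Theorem~\ref{min-DKL}. Your added remark that the maximum is attained because there are only finitely many $k$-trees on $n$ labeled vertices is a harmless (and correct) extra detail the paper leaves implicit.
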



\subsection{Optimal Markov Networks of Tree Width $\leq k$}

Because $k$-trees are the maximum graphs of tree width $\leq k$, Theorems~\ref{min-DKL} and \ref{opt-markov-ktree} are not immediately applicable to optimization of Markov networks of tree width $\leq k$. Our derivations so far have not included   
the situation that, given $k\geq 1$, an optimal
 Markov network of tree width $\leq k$ is not exactly a $l$-tree, for any $l\leq k$.  
In this section, we resolve this issue with a slight adjustment to the objective function $\Delta_{G}(\bx)$.


\begin{definition}
\label{amendment}
\rm
Let $G=(\bx, E)$ be a $k$-tree and $A \subseteq E$ be a binary relation over $\bx$. 
The spanning subgraph $G_A=(\bx, A)$ is called {\it amended graph} of $G$ with $A$.
\end{definition}

\begin{proposition}
A graph has tree width $\leq k$ if and only if it is an amended graph $G_A=(\bx, A)$  of some $k$-tree $G=(\bx, E)$ with  relation $A$.

\end{proposition}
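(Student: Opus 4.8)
The statement has two directions, which I would prove separately. Let $J$ denote the graph in question. The \emph{if} direction is immediate: suppose $J$ is an amended graph $G_A=(\bx,A)$ of some $k$-tree $G=(\bx,E)$ with $A\subseteq E$. Then $J$ is a spanning subgraph of $G$, hence in particular a subgraph of the $k$-tree $G$, so by Definition~2 it has tree width $\le k$. No further work is needed here.

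The substance lies in the \emph{only if} direction. Here the hypothesis that $J$ has tree width $\le k$ only guarantees, via Definition~2, that $J$ is a subgraph of \emph{some} $k$-tree $G'=(V',E')$ whose vertex set $V'\supseteq\bx$ may be strictly larger than $V(J)=\bx$. The goal is to strengthen this to a $k$-tree on \emph{exactly} $\bx$: I must produce a $k$-tree $G=(\bx,E)$ with $F\subseteq E$, where $F=E(J)$, after which setting $A=F$ gives $G_A=J$. I would assume $|\bx|\ge k$, consistent with the paper's setting; for $|\bx|=k$ one simply takes $G=K_{|\bx|}$, the $k$-tree on $k$ vertices, which contains every graph on $\bx$ as a spanning subgraph.

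The plan is to reduce everything to the classical fact that a partial $k$-tree admits a $k$-tree completion on the same vertex set. Starting from $J\subseteq G'$, I would first read off a tree of $(k{+}1)$-cliques for $G'$ from a creation order $\Phi_{V'}=\bc_k\,u_{k+1}\cdots u_m$: the bags $\bc_{j-1}\cup\{u_j\}$ together with the initial clique $\bc_k$, arranged along the creation tree, cover every edge of $G'$ and have size $\le k{+}1$. Restricting each bag to $\bx$ (intersecting with $\bx$ while keeping the same tree shape) discards the extra vertices $V'\setminus\bx$ yet leaves every edge of $J$ covered, since both endpoints of such an edge lie in $\bx$; the connectedness property survives restriction, so this yields a width-$\le k$ tree decomposition of $J$ whose bags are subsets of $\bx$. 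Thus the nuisance of extra vertices is dispatched cleanly by this bag-restriction step and is \emph{not} where the difficulty lies.

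The remaining task is to complete this decomposition into a $k$-tree on $\bx$. After merging adjacent identical bags and padding each bag up to size exactly $k{+}1$ using vertices of neighbouring bags (a smoothing step, valid once $|\bx|\ge k{+}1$), I would obtain a decomposition in which adjacent bags share exactly $k$ vertices. Rooting the tree and processing its bags outward, the root becomes a $(k{+}1)$-clique and each later bag introduces a single new vertex joined to the shared $k$-clique; making every bag a clique then gives a graph $G$ on $\bx$ that admits a creation order, hence is a $k$-tree by Definition~1, and $F\subseteq E(G)$ because each edge of $J$ sits inside a bag. I expect this completion step to be the main obstacle: I must verify carefully that smoothing and clique-filling really produce a graph satisfying the inductive Definition~1 (equivalently, that a smooth width-$k$ decomposition corresponds exactly to a $k$-tree) and that the padding never forces the clique number above $k{+}1$.
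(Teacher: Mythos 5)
The paper states this proposition with no proof at all, so there is nothing to compare against; judged on its own, your argument is correct and supplies exactly the content the paper leaves implicit. You rightly isolate the only nontrivial point: Definition~2 certifies merely that a graph of tree width $\leq k$ is a subgraph of a $k$-tree on a possibly \emph{larger} vertex set, whereas an amended graph must be a spanning subgraph of a $k$-tree on $\bx$ itself. Your two-step repair --- restrict the clique-tree decomposition of the ambient $k$-tree to $\bx$, then smooth and clique-fill the resulting width-$\leq k$ decomposition --- is the standard folklore route (the correspondence between smooth width-$k$ tree decompositions and $k$-trees), and each step is sound: intersecting every bag with a fixed vertex set preserves both the edge-coverage and the connectedness conditions, and a smooth decomposition processed outward from a root is precisely a creation order in the sense of Definition~1. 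Two small points to tighten: in the normalization step you should contract a tree edge whenever one bag is \emph{contained in} its neighbour, not only when the two bags are identical (otherwise the padding can stall on a bag that is a proper subset of its neighbour), and you should note explicitly that every vertex of $\bx$, including vertices isolated in $J$, survives the restriction because it already lay in some bag of the ambient decomposition. Neither point affects correctness, and the boundary cases $|\bx|=k$ and $|\bx|\geq k+1$ are handled appropriately.
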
 

\begin{definition}\rm 
\label{ammended-orientation}
Let $\hat{E}$ be an acyclic orientation for edges in $k$-tree $G=(\bx, E)$ and $G_A=(\bx, A)$ be an amended graph
of $G$. 
Then the orientation $\hat{A}$ of edges in the graph $G_{A}$ is defined as, for every pair of $X_i, X_j \in \bx$, 
\[ \langle X_i, X_j \rangle \in \hat{A}  \, \Longleftrightarrow  \, \langle X_i, X_j \rangle \in \hat{E} \, \wedge \, (X_i, X_j) \in A\]
\end{definition}

\begin{definition}
\label{new-delta}\rm 
Let $G=(\bx, E)$ be a $k$-tree, where $|\bx|=n$.
Let $G_A=(\bx, A)$ be an amended graph of $G$. Define
\begin{equation}
\label{obj3}
\Delta_{G, \hat{A}}(\bx) = \sum_{i=1}^n I(X_i, \pi_{\hat{A}}(X_i))
\end{equation}
where $\hat{A}$ is an acyclic orientation on edges $A$.
\end{definition}

We can apply the derivation in the previous section by replacing $\pi_{\hat{E}}(X_i)$ with
$\pi_{\hat{A}}(X_i)$ and obtain
\begin{corollary} 
\label{opt-Markov-network}
Let $k\geq 1$ and $\bx=\{X_1, \dots, X_n\}$ be a set of random variables. The optimal Markov network of tree width $\leq k$
for $\bx$ is the amended graph $G^*_A=(\bx, A)$ of some $k$-tree $G=(\bx, E)$ 
that satisfies 
\[ G_A^* = \arg \max_{G_A, \hat{A}} \{ \Delta_{G, \hat{A}}(\bx) \}\]
\end{corollary}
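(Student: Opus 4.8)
The plan is to reduce Corollary~\ref{opt-Markov-network} to the already-established machinery for $k$-trees by showing that the same divergence decomposition from the proof of Theorem~\ref{min-DKL} carries over verbatim once we interpret parent sets with respect to the amended orientation $\hat{A}$ rather than the full orientation $\hat{E}$. First I would invoke the Proposition preceding Definition~\ref{ammended-orientation}, which states that the class of graphs of tree width $\leq k$ is exactly the class of amended graphs $G_A=(\bx, A)$ of $k$-trees $G=(\bx,E)$ with $A\subseteq E$. This means the search space for an optimal Markov network of tree width $\leq k$ is precisely the set of all such pairs $(G_A, \hat{A})$, so minimizing $D_{KL}(P\parallel P_{G_A})$ over all tree-width-$\leq k$ topologies is the same as minimizing over all amended graphs of $k$-trees.

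Next I would establish the analogue of the factorization in equation~(\ref{joint_prob}) for the amended graph, namely that
\[ P_{G_A}(\bfx) = \prod_{i=1}^n P(X_i \mid \pi_{\hat{A}}(X_i)). \]
This is the step that needs genuine care, because dropping edges from $E$ to obtain $A$ enlarges the conditional-independence structure: each vertex now conditions on a subset $\pi_{\hat{A}}(X_i)\subseteq\pi_{\hat{E}}(X_i)$ of its former parents. Since $\hat{A}$ is by Definition~\ref{ammended-orientation} the restriction of the acyclic orientation $\hat{E}$ to the surviving edges $A$, it is itself acyclic and respects the same creation-order-induced vertex ordering. I would therefore rerun the recurrence argument of equation~(\ref{recurrence}), with the key observation that along the creation order each $X_i$ remains conditionally independent of the earlier variables it is no longer joined to, so the telescoping product goes through with $\bc_{i-1}$ replaced by $\pi_{\hat{A}}(X_i)$. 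This yields the amended factorization.

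With the amended factorization in hand, the remainder is exactly the algebraic manipulation already performed in the proof of Theorem~\ref{min-DKL}: substituting $P_{G_A}(\bfx)$ into $D_{KL}(P\parallel P_{G_A})$, splitting the logarithm, and identifying the mutual-information and entropy terms gives
\[ D_{KL}(P\parallel P_{G_A}) = -\sum_{i=1}^n I(X_i, \pi_{\hat{A}}(X_i)) + \sum_{i=1}^n H(X_i) - H(\bfx). \]
Because $\sum_i H(X_i)$ and $H(\bfx)$ depend only on $P$ and not on the choice of $(G_A,\hat{A})$, the divergence is minimized exactly when $\Delta_{G,\hat{A}}(\bx)=\sum_{i=1}^n I(X_i,\pi_{\hat{A}}(X_i))$ is maximized, which is the claimed characterization $G_A^* = \arg\max_{G_A,\hat{A}}\{\Delta_{G,\hat{A}}(\bx)\}$.

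The main obstacle I anticipate is the conditional-independence justification inside the amended factorization: I must confirm that deleting edges genuinely corresponds to a valid Markov network (so that the global property~(\ref{global}) still licenses the conditional independences used in the recurrence) and that $\hat{A}$ inherits acyclicity and a consistent topological order from $\hat{E}$. A subtler point worth flagging is the invariance of $\Delta_{G,\hat{A}}(\bx)$ under the choice of orientation. For the full $k$-tree this was guaranteed by Theorem~\ref{creation-order-independent}, but for an amended graph different underlying $k$-trees $G$ and different orientations $\hat{E}$ may yield the same topology $A$; I would note that the value $\Delta$ depends only on the parent sets $\pi_{\hat{A}}(X_i)$ determined by $A$ together with a valid acyclic orientation, so the $\arg\max$ is well defined over amended topologies, with any ambiguity absorbed into the joint maximization over both $G_A$ and $\hat{A}$.
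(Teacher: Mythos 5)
Your proposal follows essentially the same route as the paper, which proves this corollary only by remarking that one ``can apply the derivation in the previous section by replacing $\pi_{\hat{E}}(X_i)$ with $\pi_{\hat{A}}(X_i)$'' --- i.e., rerun the factorization and the Kullback--Leibler decomposition of Theorem~\ref{min-DKL} with the amended parent sets. Your version is in fact more careful than the paper's, since you explicitly flag and justify the amended factorization and the well-definedness of the $\arg\max$, but the underlying argument is the same.
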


\section{Efficient Computation with Markov $k$-Trees}

Optimal topology learning is computationally intractable for Markov networks of arbitrary graph topology \cite{ChickeringEtAl1994}. This obstacle is not exception to Markov $k$-trees. In particular, we are able to relate the optimal learning of Markov $k$-trees to the following graph-theoretic problem. 

\begin{definition} 
\label{mskt}
\rm Let $k\geq 1$. The {\sc Maximum Spanning $k$-Tree} (MS$k$T) problem is, on input graph of vertices 
$\bx=\{X_1, \dots, X_n\}$, to find a spanning $k$-tree $G=(\bx, E)$ with an acyclic orientation $\hat{E}$ 
such that 
\[ \sum\limits_{i=1}^n f(X_i, \pi_{\hat{E}}(X_i))\]
achieves the maximum.
\end{definition}

In the definition, function $f$ is pre-defined numerical function for a pair: vertex $X_i$ and its parent set $\pi_{\hat{E}}(X_i)$ in the output $k$-tree. MS$k$T generalizes the traditional ``standard'' definition of the maximum spanning $k$-tree problem \cite{Bern1987,CaiAndMaffray1993}, where the objective function is the sum of weights $w: E \rightarrow R$ on all edges involved the output $k$-tree; that is  
\begin{equation}
\label{sum_of_edge_weights}
  f(X_i, \pi_{\hat{E}}(X_i)) =   \sum_{X\in \pi_{\hat{E}}(X_i)} w(X, X_i)
\end{equation}

\begin{proposition}
\label{markov-mskt}
The Markov $k$-tree learning problem is the problem {\rm MS$k$T} defined in {\rm Definition~\ref{mskt}} in which
for every $i=1,\dots, n$, $f(X_i, \pi_{\hat{E}}(X_i)=I(X_i, \pi_{\hat{E}}(X_i))$, the mutual information between $X_i$ and its parent set $\pi_{\hat{E}}(X_i)$.
\end{proposition}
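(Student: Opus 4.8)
The plan is to verify that, once the generic cost $f$ in Definition~\ref{mskt} is instantiated as the mutual information $I$, the two optimization problems share a single objective function and therefore define the same optimizer. I would proceed in three short steps, the content being a matching of definitions rather than a fresh calculation.

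First I would recall precisely what the Markov $k$-tree learning problem asks. By Theorem~\ref{min-DKL}, minimizing $D_{KL}(P \parallel P_G)$ over all $k$-tree topologies $G$ is, via the decomposition~(\ref{opt}), equivalent to maximizing $\sum_{i=1}^n I(X_i, \pi_{\hat{E}}(X_i))$, since the terms $\sum_i H(X_i)$ and $H(\bx)$ depend neither on $G$ nor on the chosen orientation. Thus the learning problem is exactly to compute $G^* = \arg\max_G \Delta_G(\bx)$ as in Definition~\ref{def-G-star}.

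Next I would write down the MS$k$T objective with $f = I$. Substituting $f(X_i, \pi_{\hat{E}}(X_i)) = I(X_i, \pi_{\hat{E}}(X_i))$ into Definition~\ref{mskt}, the quantity to be maximized over spanning $k$-trees $G$ and their acyclic orientations $\hat{E}$ becomes $\sum_{i=1}^n I(X_i, \pi_{\hat{E}}(X_i)) = \Delta_{G,\hat{E}}(\bx)$, which is literally the learning objective from the previous step.

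The one point requiring care — and the only genuine obstacle — is that MS$k$T optimizes jointly over the pair $(G, \hat{E})$, whereas the learning problem ranges over $G$ alone. I would close this gap by invoking the orientation-invariance of $\Delta$ established by Theorem~\ref{creation-order-independent} together with the proof of Theorem~\ref{min-DKL}: for a fixed $k$-tree $G$, the value $\Delta_{G,\hat{E}}(\bx)$ is the same for every acyclic orientation $\hat{E}$, so the inner maximization over $\hat{E}$ is vacuous and $\Delta_{G,\hat{E}}(\bx) = \Delta_G(\bx)$. Hence maximizing over $(G,\hat{E})$ returns precisely $\arg\max_G \Delta_G(\bx) = G^*$, and the two problems coincide; the remaining details are purely notational.
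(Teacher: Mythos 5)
Your proposal is correct and follows the same route the paper takes: the paper states Proposition~\ref{markov-mskt} without proof, treating it as an immediate identification of the learning objective $\Delta_G(\bx)$ from Theorem~\ref{opt-markov-ktree} and Definition~\ref{def-G-star} with the MS$k$T objective under $f=I$, relying on the orientation-invariance remark following Theorem~\ref{creation-order-independent} exactly as you do. Your explicit handling of the $(G,\hat{E})$-versus-$G$ optimization domains is the only point of substance, and you resolve it the same way the paper does.
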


It has been proved that, for any fixed $k\geq 2$, the problem {\sc MS$k$T} with the objective function defined with $f$ given in equation (\ref{sum_of_edge_weights}) is NP-hard \cite{Bern1987}. The intractability appears inherent since a number of variants of the problem remains intractable \cite{CaiAndMaffray1993}. It implies that 
optimal topology learning of Markov $k$-trees (for $k\geq 2$) is computationally intractable. Independent works in learning Markov networks of bounded tree width have also confirmed this unfortunate difficulty \cite{Srebro2003,SzantaiAndKovacs2012}.


\subsection{Efficient Learning of Optimal Markov Backbone $k$-Trees}

We now consider a class of less general Markov $k$-trees that are of higher order topologies 
over an underlying linear relation. In particular, such networks carry the signature of relationships among consecutively indexed random variables. 

\begin{definition}
\label{backbone}
\rm
Let $\bx = \{X_1, X_2, \dots, X_n\}$ be the set of integer-indexed variables. 
\vspace{0mm}
\begin{itemize}
\item[(1)] The graph
$B_{\bx}=(\bx, b)$, where $b =\{ (X_i, X_{i+1}): \, 1\leq i < n\}$, is called the {\it backbone} of $\bx$. And each edge $(X_i, X_{i+1})$, for $1\leq i < n$, is called a {\it backbone edge}.
\vspace{-1mm}
\item[(2)] A graph (resp. $k$-tree) $G=(\bx, E)$ is called a 
{\it backbone graph} (resp. {\it backbone $k$-tree}) if it contains $B_{\bx}$ as a subgraph.
\end{itemize} 
\end{definition}


\begin{definition} \rm
Let $k\geq 1$. A Markov network is called {\it Markov backbone $k$-tree} if the underlying topology graph of the Markov network is a backbone $k$-tree. 
\end{definition}

The linearity relation among random variables occurs naturally in many applications with Markov network modeling, for instance, in speech recognition, cognitive linguistics, and higher-order relations among residues on biological sequences. Typically, the linearity also plays an important role in random systems involving variables associated with the time series. For example, 

\begin{theorem}
For each $k\geq 1$, the topology graph of any finite $k$th-order Markov chain is a backbone $k$-tree. 
\end{theorem}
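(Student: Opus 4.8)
The plan is to identify the Markov-network topology of a $k$th-order Markov chain explicitly, recognize it as the $k$th power of the backbone path, and then exhibit a creation order witnessing that this graph is a $k$-tree. Recall that in a $k$th-order Markov chain each variable $X_i$ (for $i>k$) is conditionally independent of $X_1,\dots,X_{i-k-1}$ given the window $\{X_{i-k},\dots,X_{i-1}\}$; equivalently, every block of $k$ consecutive variables $\{X_{m+1},\dots,X_{m+k}\}$ separates $\{X_1,\dots,X_m\}$ from $\{X_{m+k+1},\dots,X_n\}$. First I would translate this separator structure into the topology graph $G=(\bx,E)$ using the equivalence between the pairwise property~(\ref{cond-ind}) and the global property~(\ref{global}): for $|i-j|>k$ the pair $X_i,X_j$ is separated by a $k$-block lying strictly between them, so $(X_i,X_j)\notin E$, whereas pairs with $|i-j|\le k$ are adjacent (they are exactly the edges of the moral graph obtained by joining each $X_i$ to its conditioning set and making that consecutive set a clique). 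This shows $E=\{(X_i,X_j):|i-j|\le k\}$, i.e. $G$ is the $k$th power $B_{\bx}^{\,k}$ of the backbone.

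Next I would check that $G$ is a $k$-tree directly from Definition~\ref{def_ktree} by supplying the creation order $\Phi_{\bx}=\{X_1,\dots,X_k\}\,X_{k+1}\cdots X_n$. The base $\{X_1,\dots,X_k\}$ is a $k$-clique, since any two of its indices differ by less than $k$. For each $i>k$ the neighbors of $X_i$ among $\{X_1,\dots,X_{i-1}\}$ are exactly $\{X_{i-k},\dots,X_{i-1}\}$, which is a $k$-clique already present, and $X_i$ together with it forms a $(k+1)$-clique; hence adding $X_i$ obeys clause~2 of Definition~\ref{def_ktree}. A pleasant consistency check is that the parent sets induced by this creation order are $\pi_{\hat{E}}(X_i)=\{X_{i-k},\dots,X_{i-1}\}$, so the factorization~(\ref{joint_prob}) of the corresponding Markov $k$-tree reproduces precisely the defining factorization of the $k$th-order chain.

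Finally, since every backbone edge $(X_i,X_{i+1})$ satisfies $|i-(i+1)|=1\le k$, all backbone edges lie in $E$, so $G$ contains $B_{\bx}$ and is therefore a backbone $k$-tree in the sense of Definition~\ref{backbone}. The main obstacle I anticipate is the first step: arguing rigorously that the topology graph is exactly $B_{\bx}^{\,k}$ --- one must verify both that distant pairs are genuinely separated by a $k$-window drawn from ``the rest'' (so those edges are absent) and that no spurious extra independencies collapse within-window edges, which is cleanest to handle via the moral-graph characterization of the chain's directed model. The $k$-tree and backbone verifications are then routine bookkeeping on index differences.
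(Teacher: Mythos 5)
Your proposal is correct and follows essentially the same route as the paper's own proof: both identify the edges of the chain's topology graph as the pairs within index distance $k$ and then exhibit the creation order $\{X_1,\dots,X_k\}\,X_{k+1}\cdots X_n$ with parent sets $\{X_{i-k},\dots,X_{i-1}\}$ to witness the $k$-tree property and the inclusion of the backbone. The only difference is one of detail: you justify via the global Markov/separator property that the topology graph is exactly the $k$th power of the path, whereas the paper simply asserts this step with ``clearly''; your added care there is a strengthening, not a different argument.
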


\begin{proof}
This is because a finite $k$th-order Markov chain $G$ is defined over labelled variables $X_1, \dots, X_n$ such that $P_G(X_i | X_1, \dots, X_n) = P_G(X_i | X_{i-1}, \dots, X_{i-k})$, for every $i=1, 2, \dots, n$. Clearly, its topology graph contains all edges $(X_i, X_{i+1})$, for $i, 1\leq i \leq n-1$. In addition, the following creation order asserts that the graph is a $k$-tree: $\bc_k X_{k+1} \dots \bc_{i-1} X_{i} \dots \bc_{n-1} X_n$, where $\bc_{i-1} = \{X_{i-k}, \dots, X_{i-1}\}$, for all $i= k+1, \dots, n$.
\end{proof}

We now show that learning the optimal topology of Markov backbone $k$-trees can be accomplished much more efficiently than with unconstrained Markov $k$-trees.
We relate the learning problem to a special version of graph-theoretic problem {\sc MS$k$T} that can be computed efficiently. In the rest of this section, we present some technical details for such connection.

\begin{definition}
\label{def_condition} \rm Let ${\cal G}$ be a class of graphs. 
Then the {\it $\cal G$-retaining} \mskt problem is the \mskt problem in which 
the input graph includes a spanning subgraph $H \in {\cal G}$ that is required to be contained by the output spanning $k$-tree.
\end{definition}

The unconstrained \mskt is simply the ${\cal G}$-retaining \mskt problem with ${\cal G}$ being the class of independent set graphs. There are other classes of graphs that are of interest to our work.  In particular, we define $\cal B$ to be the class of all backbones. Then it is not difficult to see:

\begin{proposition}
\label{markov-msbkt}
The Markov backbone $k$-tree learning problem is the ${\cal B}$-retaining {\rm \mskt}problem.
\end{proposition}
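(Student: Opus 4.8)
The plan is to prove the proposition as an identity between two optimization problems by matching their objective functions and their feasibility constraints separately. By Proposition~\ref{markov-mskt}, the unconstrained Markov $k$-tree learning problem is exactly \mskt with the numerical function instantiated as $f(X_i, \pi_{\hat{E}}(X_i)) = I(X_i, \pi_{\hat{E}}(X_i))$, and by Theorem~\ref{opt-markov-ktree} its optimum is the $k$-tree topology maximizing $\Delta_{G}(\bx) = \sum_{i=1}^n I(X_i, \pi_{\hat{E}}(X_i))$. The two problems in the present proposition share this same objective, so it suffices to show that their \emph{feasible sets} coincide, namely that ``the topology graph is a backbone $k$-tree'' (the Markov backbone $k$-tree constraint) is equivalent to ``the output spanning $k$-tree contains some $H \in {\cal B}$'' (the ${\cal B}$-retaining constraint of Definition~\ref{def_condition}).

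First I would pin down the constraint on the ${\cal B}$-retaining side. For a fixed integer-indexed vertex set $\bx = \{X_1, \dots, X_n\}$, Definition~\ref{backbone} assigns a \emph{unique} backbone $B_{\bx} = (\bx, b)$, the index-order path with edge set $b = \{(X_i, X_{i+1}): 1 \le i < n\}$. Consequently the only spanning subgraph of the input lying in the class ${\cal B}$ of all backbones is $B_{\bx}$ itself, so the requirement that the output $k$-tree contain a retained $H \in {\cal B}$ forces $H = B_{\bx}$, hence that $B_{\bx}$ is a subgraph of the output $k$-tree. By the second clause of Definition~\ref{backbone}, this is precisely the statement that the output $k$-tree is a backbone $k$-tree, which establishes the equivalence of the two feasible sets.

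It then remains only to confirm that the shared objective is well defined over this restricted class. Every backbone $k$-tree is in particular a $k$-tree, so it admits a creation order and thus an acyclic orientation $\hat{E}$ with well-defined parent sets $\pi_{\hat{E}}(X_i)$, and by Theorem~\ref{creation-order-independent} the value $\Delta_{G}(\bx)$ is invariant of the orientation chosen. The derivation in the proof of Theorem~\ref{min-DKL} never uses any property of $G$ beyond its being a $k$-tree, so restricting the feasible set from all spanning $k$-trees to backbone $k$-trees leaves the identity $D_{KL}(P \parallel P_G) = -\Delta_{G}(\bx) + \sum_{i=1}^n H(X_i) - H(\bfx)$ intact. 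Combining the matched objective with the matched constraint yields that the optimal Markov backbone $k$-tree is exactly the solution of the ${\cal B}$-retaining \mskt problem with $f = I$.

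I expect no substantive obstacle here: the statement is essentially a bookkeeping identity. The only point requiring genuine care is the observation that ${\cal B}$ contributes a \emph{single} admissible retained subgraph $B_{\bx}$ per indexed input, rather than an arbitrary Hamiltonian path, so that ``${\cal B}$-retaining'' coincides with the backbone constraint instead of being strictly weaker. The mildly delicate part is being explicit that the objective-function correspondence of Proposition~\ref{markov-mskt} and Theorem~\ref{min-DKL} transfers verbatim to the constrained problem, which follows because those derivations are agnostic to which $k$-trees are permitted.
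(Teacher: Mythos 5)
Your argument is correct and matches the paper's intent exactly: the paper states this proposition without proof (``it is not difficult to see''), and your unpacking --- that the objective transfers verbatim from Proposition~\ref{markov-mskt} with $f = I$, while the backbone-$k$-tree constraint of Definition~\ref{backbone} coincides with retaining the unique spanning subgraph $B_{\bx} \in {\cal B}$ per Definition~\ref{def_condition} --- is precisely the intended bookkeeping. No issues.
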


In the following, we will show that,
for various classes ${\cal G}$ that satisfy certain graph-theoretic property, including class ${\cal B}$, the ${\cal G}$-retaining \mskt problems  can be solved efficiently.

\begin{definition}
\label{bb}\rm
Let fixed $k\geq 1$. 
A $k$-tree $G$ has {\it bounded branches} if every $(k+1)$-clique in $G$ separates $G$ into a bounded number of connected components.
\end{definition}

\begin{definition}
\label{bbf}\rm
Let fixed $k\geq 1$. A graph $H$ is {\it bounded branching-friendly} if every $k$-tree that contains $H$ as a spanning subgraph has bounded branches.
\end{definition}

\begin{lemma}
\label{backbone_is_bb_friendly} 
Let $\cal B$ be the class of all backbones.
Any graph in ${\cal B}$ is bounded branching friendly.
\end{lemma}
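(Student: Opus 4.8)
The plan is to exploit the path structure of the backbone directly, reducing the whole claim to an elementary fact about deleting vertices from a path. First I would unfold the definitions: by Definition~\ref{bbf} it suffices to fix an arbitrary $k$-tree $G=(\bx, E)$ that contains the backbone $B_{\bx}$ as a spanning subgraph, together with an arbitrary $(k+1)$-clique $\kappa$ in $G$, and to bound the number of connected components of $G-\kappa$ by a constant depending only on $k$. Since $\kappa$ consists of exactly $k+1$ vertices, I would write $\kappa=\{X_{i_0},\dots,X_{i_k}\}$ with $i_0<\cdots<i_k$ and examine what removing these indices does to the index set $\{1,\dots,n\}$.

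The key observation is purely combinatorial: deleting the $k+1$ indices $i_0<\cdots<i_k$ from $\{1,\dots,n\}$ leaves at most $k+2$ maximal runs of consecutive integers (one before $i_0$, one between each pair of consecutive deleted indices, and one after $i_k$, some possibly empty). I would then argue that, because $G$ contains every backbone edge $(X_j,X_{j+1})$ for $1\leq j<n$, the vertices whose indices lie in a single such run are pairwise connected inside $G-\kappa$ through backbone edges alone. Indeed, within one run neither endpoint of any backbone edge belongs to $\kappa$, so every such edge survives the deletion. Hence the vertex set $\bx\setminus\kappa$ is covered by at most $k+2$ sets, each contained in a single connected component of $G-\kappa$.

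It then follows immediately that $G-\kappa$ has at most $k+2$ connected components; that is, $\kappa$ separates $G$ into a number of components bounded by the constant $k+2$, which depends only on the fixed $k$ and not on $n$. Since $\kappa$ was an arbitrary $(k+1)$-clique, $G$ has bounded branches in the sense of Definition~\ref{bb}, and since $G$ was an arbitrary $k$-tree containing $B_{\bx}$ as a spanning subgraph, the backbone $B_{\bx}$ is bounded branching-friendly, which is the assertion of the lemma.

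I do not anticipate a genuine obstacle here: the argument rests on the single fact that removing $k+1$ vertices from a path breaks it into at most $k+2$ subpaths, combined with the observation that the backbone edges internal to each subpath are untouched by the deletion. The only point deserving care is to make explicit that the resulting bound is a constant independent of $n$ — so that the word \emph{bounded} is justified — which the count $k+2$ delivers directly.
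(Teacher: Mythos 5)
Your proof is correct, and it shares the paper's central combinatorial observation --- that removing the $k+1$ vertices of $\kappa$ splits the index set $[n]$ into at most $k+2$ intervals --- but it reaches the conclusion by a more direct route. The paper first introduces the family $\mathcal{K}$ of $(k+1)$-cliques sharing $k$ vertices with $\kappa$, asserts that the number of components of $G-\kappa$ is at most $|\mathcal{K}|$, and then bounds $|\mathcal{K}|\leq k+2$ by arguing that two such cliques cannot place their extra vertices in the same interval without violating the backbone requirement; both of these intermediate steps are stated rather than fully justified. You instead bypass $\mathcal{K}$ entirely: each maximal run of surviving consecutive indices is internally connected by backbone edges (none of whose endpoints lie in $\kappa$), so $\bx\setminus\kappa$ is covered by at most $k+2$ connected pieces, giving the bound on components immediately. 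This buys a shorter and more self-contained argument resting only on the elementary fact about deleting $k+1$ vertices from a path, at the cost of none of the generality; if anything, your version is the cleaner one to keep.
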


\begin{proof}

Let $G=(\bx, E)$ be a $k$-tree containing a backbone $B_{\bx}=(\bx, b)$. 
Let $(k+1)$-clique $\kappa=\{x_1, \dots, x_{k+1}\}$ with $x_1< x_2 \dots < x_{k+1}$. Let 
\[{\cal K} =\{ \kappa':\, \kappa' \mbox{ is a  $(k+1)$-clique and } |\kappa \cap \kappa' | = k\}\]
It is not difficult to see that $\kappa$ separates the $k$-tree into at most $|{\cal K}|$ connected components.

Note that $\kappa$ divides set $[n]$ into at most $k+2$ intervals. Let $\kappa_1, \kappa_2\in {\cal K}$  be two $(k+1)$-cliques such that vertices $z_1 \in  \kappa_1 \backslash \kappa$ and $z_2 \in  \kappa_2 \backslash \kappa$. Then $z_1$ and $z_2$ cannot belong to the same interval in order to guarantee all backbone edges are included in $G$. This implies $|{\cal K}| \leq k+2$, a constant when $k$ is fixed.
\end{proof}

\begin{lemma} 
\label{generalmskt}
Let $\cal G$ be a class of bounded branching friendly graphs. Then {\it ${\cal G}$-retaining} {\sc MS$k$T} problem can be solved in time $O(n^{k+1})$ for every fixed $k\geq 1$. 
\end{lemma}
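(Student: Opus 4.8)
The plan is to solve the ${\cal G}$-retaining \mskt problem by dynamic programming over the clique-tree structure of spanning $k$-trees, where the bounded-branches property is what keeps both the number of subproblems and the cost of each transition polynomial. Before setting up the recursion I would record two structural facts that make a DP possible. First, by the creation-order representation (Definition~3) together with the orientation-independence of Theorem~\ref{creation-order-independent}, the objective $\sum_{i=1}^n f(X_i, \pi_{\hat E}(X_i))$ decomposes into a sum of \emph{local} terms: each vertex introduced after the base $k$-clique contributes exactly $f(X, \bc)$, where $\bc$ is the $k$-clique to which $X$ is attached. Thus the objective is additive over the incremental construction of the $k$-tree, which is precisely what a DP needs. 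Second, the $(k+1)$-cliques of a $k$-tree form a tree in which adjacent cliques share a $k$-clique face; I would root this tree and build the $k$-tree outward from the root, one attached vertex at a time.

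Next I would define the DP. A subproblem is indexed by a $k$-clique $S$ (a candidate separator/attachment face) together with a bounded amount of side information specifying the \emph{region} $R\subseteq\bx$ that is to be realized as the subtree hanging off $S$; the value $\mathrm{OPT}[S,R]$ is the maximum objective for a $k$-tree on $S\cup R$ in which all of $R$ attaches through $S$ and which contains the portion of $H$ lying inside $R$. The recurrence selects the vertex $v\in R$ attached directly to $S$, pays $f(v,S)$, forms the $(k+1)$-clique $S\cup\{v\}$, recurses on the $k$ new faces $(S\setminus\{s\})\cup\{v\}$, and sums the resulting values after distributing $R\setminus\{v\}$ among those faces. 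I would prove correctness by induction on $|R|$, invoking the clique-tree decomposition and the locality of the objective established above, and I would read off the global optimum by combining at a root $(k+1)$-clique.

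The crux, and the place where bounded branches is indispensable, is controlling two potential explosions: the number of distinct regions a single separator can bound, and the number of ways $R\setminus\{v\}$ can be split among the child faces. For an unconstrained $k$-tree a single separator bounds exponentially many regions and a single clique may spawn $\Theta(n)$ branches, so the naive recursion is exponential. Since every graph in $\cal G$ is bounded branching-friendly, every $(k+1)$-clique in every admissible $k$-tree separates it into only $O(1)$ components; I would use this exactly as in Lemma~\ref{backbone_is_bb_friendly}, where the backbone forces the components to coincide with the at most $k+2$ intervals cut out by a clique, to show that each separator bounds only $O(1)$ canonically describable regions and that each clique has only $O(1)$ children. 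Consequently there are $O(n^k)$ subproblems, and each transition enumerates the $O(n)$ choices of $v$ and combines an $O(1)$-sized family of child values in constant time, for a total of $O(n^k)\cdot O(n)=O(n^{k+1})$; ranging over the $O(n^{k+1})$ possible root cliques in the final combination stays within this bound.

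I expect the main obstacle to be precisely the bookkeeping that upgrades ``bounded number of components'' to ``bounded number of regions per separator.'' The bounded-branches hypothesis directly limits component count, but to index subproblems by $(S,\text{constant-size data})$ I must argue that the boundary interface of each hanging component is itself $O(1)$-describable, so that regions admit a canonical constant-size encoding relative to their separator; otherwise the state space, though generated by $O(1)$ branching, could still be superpolynomial. Making this encoding precise, and verifying that the distribution of $R\setminus\{v\}$ among child faces is determined up to $O(1)$ choices by $S$, $v$, and the retained subgraph $H$, is the technical heart of the argument.
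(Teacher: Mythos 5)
Your overall architecture is the same as the paper's: a dynamic program over cliques of the spanning $k$-tree, with the bounded-branching hypothesis invoked (exactly as in Lemma~\ref{backbone_is_bb_friendly}) to keep the branching at each clique constant. Your complexity accounting, $O(n^k)$ clique states times $O(n)$ choices of the attached vertex, also matches the paper's improved bound. However, there is a genuine gap, and you have in fact named it yourself in your final paragraph: your subproblem is indexed by a pair $(S,R)$ where $R\subseteq\bx$ is an arbitrary region hanging off the separator $S$, and as written this gives a superpolynomial state space. You say that upgrading ``bounded number of components'' to ``$O(1)$ canonically describable regions'' is the technical heart, but you do not supply the encoding, nor do you resolve how $R\setminus\{v\}$ is distributed among the $k$ child faces without enumerating set partitions. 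A proof that defers its own crux is not yet a proof of the time bound.

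The paper closes this gap with a specific device you should adopt. The canonical objects are the connected components of $H\setminus\kappa$, where $H\in{\cal G}$ is the retained input subgraph and $\kappa$ the current clique: these are determined by the \emph{input}, not by the $k$-tree being built, and the first part of the argument shows there are at most $c(k)=O(1)$ of them (any $k$-tree containing $H$ and separated by $\kappa$ into more components than $H\setminus\kappa$ has would violate bounded branching-friendliness). The region $R$ is then replaced by a permission bit-vector $\alpha\in\{0,1\}^{c(k)}$ recording which of these components the sub-$k$-tree rooted at $\kappa$ is still allowed to draw vertices from, giving states $M(\kappa,\alpha)$ with only a constant-factor blowup over the clique count. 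The distribution problem is handled not by splitting $R\setminus\{v\}$ among all child faces at once but by binarizing: the recurrence
$M(\kappa,\alpha)=\max_{x,y,\beta,\gamma}\{M(\kappa',\beta)+M(\kappa,\gamma)+f(x,\kappa\setminus\{y\})\}$
with $\kappa'=\kappa\cup\{x\}\setminus\{y\}$ peels off one new vertex $x$ at a time and splits the permission vector $\alpha$ into $\beta$ (for the branch through $\kappa'$) and $\gamma$ (for the remaining branches still attached at $\kappa$), with each `1' of $\alpha$ going to exactly one of $\beta,\gamma$. Since $\beta,\gamma$ range over a constant-size set, each transition costs $O(n)$ for the choice of $x$ times $O(1)$ for $y$, $\beta$, $\gamma$, and the claimed $O(n^{k+1})$ bound follows. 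With this replacement of $(S,R)$ by $(\kappa,\alpha)$ and of your simultaneous distribution step by the two-term recurrence, your argument becomes the paper's.
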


\begin{proof} 

(We give a sketch for proof.)

  Note that the {\it ${\cal G}$-retaining} {\sc MS$k$T} problem is defined as follows:

\vspace{2mm}

$\mbox{ }$\, \, Input: a set of vertices $\bx =\{X_1, \dots, X_n\}$ and a graph $H=(\bx, E_H) \in {\cal G}$, 

$\mbox{ }$\, \, Output: $k$-tree $G^*=(\bx, E)$ with $E_H \subseteq E$ such that  
\[G^* = \arg \max\limits_{G, \hat{E}} \, \{ \, \sum\limits_{i=1}^{n} f(X_i, \pi_{\hat{E}}(X_i)\}\]

$\mbox{ }$\hspace{15mm} \, where $f$ is a pre-defined function.

\vspace{2mm}

 First, any $(k+1)$-clique in a $k$-tree that contains $H$ as a spanning subgraph can only separate $H$ into a bounded number of connected components. To see this, assume some $(k+1)$-clique 
 $\kappa$ does the opposite. From the graph $H$ and $\kappa$ we can construct a $k$-tree by augmenting $H$ with additional edges. We do this without connecting any two of the components in $H$ disconnected by $\kappa$.  However, the resulted $k$-tree, though containing $H$ as a spanning subgraph, would be separated by $\kappa$ into an unbounded number of components, contradicting the assumption that $H$ is bounded branching friendly.  
 
 Second, consider a process to construct a spanning $k$-tree anchored at any fixed $(k+1)$-clique $\kappa$. Note that by the aforementioned discussion, in the $k$-tree (to be constructed) the number of components separated by $\kappa$ is some constant $c(k) \geq 1$, possibly a function in $k$ which is fixed. For every component, its connection to $\kappa$ is by some other $(k+1)$-clique $\kappa'$ such that 
 $\kappa' = \kappa \cup \{x\} \backslash \{y\}$, where $y$ is drawn from the component to ``swap'' out $x$, a vertex chosen from $\kappa$. It is easy to see that, once $x$ is chosen, there are only a bounded number of ways (i.e,, at most $k+1$ possibilities) to choose $y$. On the other hand, there are at most $O(n)$ possibilities to choose $x$ from one of the $c(k)$ connected components. If we use 0 and 1 to indicate the permission mode `yes' or 'no' that $x$ can be drawn from a specified component, an indicator for such permissions for all the components can be represented with $c(k)$ binary bits. 

Let $\alpha$ be a binary string of length $c(k)$. We define $M(\kappa, \alpha)$ to be the maximum value of a sub-$k$-tree created around $\kappa$, which is consistent with the permission indicator $\alpha$ for the associated components. We then have the following recurrence for function $M$: 
\[ M(\kappa, \alpha) = \max_{x, y, \beta, \gamma} \{ M(\kappa', \beta) + M(\kappa, \gamma) + f(x, \kappa\backslash \{y\}) \}\]
where $\kappa' = \kappa \cup \{x\} \backslash  \{y\} $ and $f$ is the function defined over the newly created vertex $x$ and its parent set $\kappa\backslash \{y\}$. The recurrence defines $M(\kappa, \alpha)$ with two recursive terms $M(\kappa', \beta)$ and $M(\kappa, \gamma)$, where the permission indicators $\beta$ and $\gamma$ are such that the state corresponding to the component from which $x$ was drawn should be set '1' by $\beta$ and '0' by $\gamma$. When a component runs out of vertices, the indicator should be set '0'. For other components, where $\alpha$ indicates '1', either $\beta$ or  $\gamma$ indicates '1' but not both. Finally, base case for the recurrence is that $M(\kappa, \alpha)=0$ when $\alpha$ is all 0's.

The recurrence facilitates a dynamic programming algorithm to compute $M$ for all entries of $(k+1)$-clique $\kappa$ and information $\alpha$. It runs in time $O(n^{k+2})$ and uses $O(n^{k+1})$ memory which can be improved to $O(n^{k+1})$-time and $O(n^k)$ memory with a more careful implementation of the idea, e.g., by considering $\kappa$ a  $k$-clique  instead of a $(k+1)$-clique \cite{DingEtAl2014,DingEtAl2015}. 
\end{proof}
Combining Proposition~\ref{markov-msbkt} and Lemmas~\ref{backbone_is_bb_friendly} and \ref{generalmskt}, we have
\begin{theorem}
\label{learning_markov_backbone}
For every fixed $k\geq 1$, optimal topology learning of Markov backbone $k$-trees can be accomplished in time $O(n^{k+1})$.
\end{theorem}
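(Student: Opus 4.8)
The plan is to assemble the theorem directly from the three results established immediately before it, treating the learning task as a combinatorial optimization and then invoking the generic dynamic-programming algorithm. First I would invoke Proposition~\ref{markov-msbkt} to recast optimal topology learning of Markov backbone $k$-trees as the ${\cal B}$-retaining \mskt problem, where $\cal B$ is the class of all backbones and the objective function is $f(X_i, \pi_{\hat{E}}(X_i)) = I(X_i, \pi_{\hat{E}}(X_i))$. The legitimacy of this reduction rests on the model-optimization results of Section~3: by Theorem~\ref{min-DKL}, minimizing $D_{KL}(P\parallel P_G)$ is equivalent to maximizing $\sum_{i=1}^n I(X_i, \pi_{\hat{E}}(X_i))$, and by Theorem~\ref{creation-order-independent} this sum is invariant under the choice of acyclic orientation, so the objective is well defined on $k$-trees rather than on oriented $k$-trees. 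The backbone requirement of Definition~\ref{backbone} is exactly the retaining constraint $E_H \subseteq E$ with the spanning subgraph $H = B_{\bx} \in {\cal B}$.

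Next I would verify that the hypothesis of Lemma~\ref{generalmskt} holds for the class $\cal B$, which is precisely the content of Lemma~\ref{backbone_is_bb_friendly}: every backbone is bounded branching friendly, since any $(k+1)$-clique of a $k$-tree containing $B_{\bx}$ partitions $[n]$ into at most $k+2$ intervals and the backbone edges force the distinct components attaching to that clique to occupy distinct intervals, bounding their number by $k+2$. With $\cal B$ confirmed to be a class of bounded branching friendly graphs, I would apply Lemma~\ref{generalmskt} with ${\cal G} = {\cal B}$ to conclude the $\cal B$-retaining \mskt problem, and hence Markov backbone $k$-tree learning, is solvable in time $O(n^{k+1})$. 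One detail I would make explicit is the precomputation of the objective: there are $O(n^{k+1})$ pairs consisting of a vertex $x$ and a candidate $k$-clique parent set $\kappa\backslash\{y\}$, and estimating each $I(x, \kappa\backslash\{y\})$ from the observed data costs $O(1)$ for bounded variable ranges and fixed $k$, so this table of $f$-values is built within the stated bound and then fed to the dynamic program.

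I expect the only genuine subtlety, rather than a true obstacle, to lie in the interface between the backbone-retaining constraint and the permission-indicator recurrence for $M(\kappa, \alpha)$ in Lemma~\ref{generalmskt}: one must be sure the algorithm only ever extends $\kappa$ by swaps $\kappa' = \kappa \cup \{x\}\backslash\{y\}$ that keep every backbone edge inside some constructed $(k+1)$-clique, so that the returned $k$-tree genuinely contains $B_{\bx}$. Since Lemma~\ref{backbone_is_bb_friendly} already supplies the constant bound $c(k)\leq k+2$ on the number of components per clique, which makes the indicator string $\alpha$ of constant length, this consistency is enforced automatically by the bounded-component structure and the three results compose with no extra work.
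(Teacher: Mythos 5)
Your proposal follows exactly the paper's own route: the paper proves this theorem in one line by combining Proposition~\ref{markov-msbkt}, Lemma~\ref{backbone_is_bb_friendly}, and Lemma~\ref{generalmskt}, which is precisely the chain you assemble. Your additional remarks on precomputing the mutual-information table and on the interface between the retaining constraint and the recurrence for $M(\kappa,\alpha)$ are sensible elaborations of details the paper leaves implicit, but they do not change the argument.
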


\subsection{Efficient Inference with Markov $k$-Trees}
On learned or constructed Markov $k$-trees, inferences and queries of probability distributions can be conducted, typically through computing the maximum  posterior probabilities for most probably explanations \cite{KollerAndFriedman2010}. The topology nature of Markov $k$-trees offers the advantage of achieving (high) efficiency for such inference computations, especially when  values of $k$ are small or moderate.    

Traditional methods for inference with constructed Markov networks seek additional independency properties that the models may offer. Typically, the method of clique factorization measures the joint probability $P(\bx)$ of a Markov network  in terms of ``potential'' $\phi$ such that 
it can be factorized according to the maximal cliques in the graph: $P(\bx) = \prod_{\kappa \in {\cal K}} \phi_{\kappa}(\bx_{\kappa})$, where ${\cal K}$ is the set of maximal cliques and $\bx_{\kappa}$ is the subset of variables associated with clique $\kappa$. \cite{Bishop2006}. For general graphs, by the Hammersley-Clifford theorem \cite{Grimmett1973}, the condition to guarantee the equation is non-zero probability $P(\bx) > 0$, that may not always be satisfied. For Markov $k$-trees, such factorization suits well because $k$-trees are naturally chordal graphs. Nevertheless, inference with a Markov network may involve tasks more sophisticated than computing joint probabilities. 

The $k$-tree topology offers a systematic strategy to compute many optimization functions over the constructed network models. This is briefly explained below.

Every $k$-tree of $n$  vertices, $n > k$, consists of exactly $n-k$ number of $(k+1)$-cliques. 
By Proposition~\ref{prefix} in Section 2, if a $k$-tree is defined by 
creation order $\Phi_{\bx}$, all  $(k+1)$-cliques uniquely correspond prefixes of 
$\Phi_{\bx}$. Thus relations among the $(k+1)$-cliques can be defined by relations among the prefixes of $\Phi_{\bx}$.
For the convenience of discussion, with $\kappa_{\bc X}$ we denote the $(k+1)$-clique corresponding to prefix $\Phi_{\bf Y} \bc X$ of $\Phi_{\bx}$. 

\begin{definition}
\label{parent_relation}
\rm Let $G=(\bx, E)$ be a $k$-tree and $\Phi_{\bx}$ be any creation order for $G$. 
Two $(k+1)$-cliques $\kappa_{\bc X}$ and $\kappa_{\bc'X'}$  in $G$ are related, denoted
with ${\cal E}(\kappa_{\bc X}, \kappa_{\bc' X'})$, if and only if 
their corresponding prefixes $\Phi_{\bf Y} \bc X$ and $\Phi_{\bf Y'} \bc' X'$ of $\Phi_{\bx}$ satisfy

\vspace{1mm}

(1) $\bc \subset \bc' \cup \{X'\}$; 

(2) $\Phi_{\bf Y'} \bc' X'$ is a prefix of $\Phi_{\bf Y}\bc X$;

(3) No other prefix of $\Phi_{\bf Y}\bc X$ satisfies (1) and of which $\Phi_{\bf Y'} \bc' X'$ is a prefix.
\end{definition}

\begin{proposition}
\label{td}
Let ${\cal K}$ be the set of all $(k+1)$-cliques in a $k$-tree and ${\cal E}$ be the relation of cliques given in Definition~\ref{parent_relation} for the cliques. Then $({\cal K}, {\cal E})$ a rooted, directed tree with directed edges from child $\kappa_{\bc X}$ to its parent $\kappa_{\bc' X'}$. 
\end{proposition}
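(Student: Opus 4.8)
The plan is to verify the three properties that make $({\cal K}, {\cal E})$ a rooted directed tree: every clique other than a distinguished root is assigned a \emph{unique} parent by ${\cal E}$; every parent edge strictly decreases position in the creation order (so ${\cal E}$ is acyclic); and iterating the parent map from any clique terminates at the root. Together with the fact (recalled in the text) that a $k$-tree on $n$ vertices has exactly $n-k$ many $(k+1)$-cliques, these give a connected acyclic digraph on $|{\cal K}| = n-k$ nodes with exactly $n-k-1$ edges oriented from child to parent, i.e.\ a rooted tree.

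First I would fix a creation order $\Phi_{\bx} = \bc_k X_{k+1} \cdots \bc_{n-1} X_n$ and, via Proposition~\ref{prefix}, identify the $(k+1)$-cliques of $G$ with the prefixes of $\Phi_{\bx}$ of the form $\Phi_{\bf Y}\bc X$. This identification totally orders ${\cal K}$ by the length of the associated prefix (equivalently, by the creation index of the introduced vertex $X$), and throughout I read the prefix relation in Definition~\ref{parent_relation}(2) as a \emph{proper} prefix, so that a parent is always a strictly earlier clique. The root is then the first $(k+1)$-clique $\kappa_{\bc_k X_{k+1}}$: no clique precedes it, so condition~(2) cannot be met and it has no parent. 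Every other clique will be shown to have a parent, so the root is the unique parentless node.

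The crux is existence of a parent for a non-root clique $\kappa_{\bc X}$, for which I must exhibit at least one strictly earlier $(k+1)$-clique containing the base $k$-clique $\bc$ (condition~(1)). Here I would invoke the key structural feature of Definition~\ref{def_ktree}: when a vertex $Z$ is introduced it is joined to exactly the $k$ vertices of the base $k$-clique against which it is added, so those are precisely $Z$'s neighbours at that moment. Let $Z$ be the last-introduced vertex of $\bc$ (and take the root clique instead in the boundary case $\bc=\bc_k$). The remaining $k-1$ vertices of $\bc$ precede $Z$ and are adjacent to it, hence lie in $Z$'s base; therefore $\bc$ is contained in the $(k+1)$-clique created when $Z$ is introduced. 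Since $X \notin \bc$ is introduced after all of $\bc$, that clique is strictly earlier than $\kappa_{\bc X}$, establishing existence. Uniqueness is then immediate: conditions~(2) and~(3) jointly single out the most recent earlier clique containing $\bc$, and because the creation order totally orders ${\cal K}$ this maximiser is unique.

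Finally I would assemble the tree. A parent is always strictly earlier in the creation order than its child, so following parent pointers strictly decreases the creation index; hence ${\cal E}$ has no directed cycle and every maximal parent chain terminates at the unique parentless clique, the root. Each of the $n-k-1$ non-root cliques contributes exactly one child-to-parent edge, giving $n-k-1$ edges on $n-k$ nodes all connected to the root, so $({\cal K},{\cal E})$ is a rooted directed tree. I expect the main obstacle to be precisely the existence argument of the preceding paragraph — showing that the base of every clique is captured by some strictly earlier clique — since that is where the inductive definition of $k$-trees (neighbours-at-introduction equal the base) must be applied carefully, including the $\bc=\bc_k$ boundary case.
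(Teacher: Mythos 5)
The paper states Proposition~\ref{td} without any proof, so there is no argument of the authors' to compare yours against; what you have written supplies the missing justification, and it is correct. The two points where the proposition actually requires work are exactly the ones you isolate. First, the ``prefix'' in condition (2) of Definition~\ref{parent_relation} must be read as a \emph{proper} prefix (otherwise every clique would block its own condition (3), since $\bc \subset \bc\cup\{X\}$ always holds); making that reading explicit is necessary, not pedantic. Second, existence of a parent for a non-root clique $\kappa_{\bc X}$ is the genuine content, and your argument for it is sound: in the inductive construction of a $k$-tree every edge is created with the newly introduced vertex as an endpoint, so all neighbours of $Z$ that precede $Z$ in the creation order lie in $Z$'s base $k$-clique; taking $Z$ to be the last-introduced vertex of $\bc$ then places all of $\bc$ inside the $(k+1)$-clique formed at $Z$'s introduction, which is a strictly earlier prefix because $X$ is introduced after every vertex of $\bc$. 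The boundary case $\bc=\bc_k$ is handled correctly by falling back to the root clique $\kappa_{\bc_k X_{k+1}}$. Uniqueness of the parent follows from the total order on prefixes, acyclicity from the strict decrease of creation index along parent edges, and the count $|{\cal K}|=n-k$ (asserted in the text) closes the argument. The only cosmetic gap is that you implicitly use the bijection between $(k+1)$-cliques and prefixes (injectivity holds because the vertex introduced at a prefix cannot belong to any earlier clique), but that is immediate from Proposition~\ref{prefix} and the stated clique count.
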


Actually, tree $({\cal K}, {\cal E})$ is a {\it tree decomposition} for the $k$-tree graph. Tree decomposition is a technique developed in algorithmic structural graph theory which makes it possible to measure how much a graph is tree-like. In particular, a tree decomposition reorganizes a graph into a tree topology connecting subsets of the graph vertices (called {\it bags}, e.g., $(k+1)$-cliques in a $k$-tree) as tree nodes \cite{RobertsonAndSeymour1986,Bodlaender2006}. A heavy overlap is required between neighboring bags to ensure that the vertex connectivity information of the graph is not lost in such representation. However, finding a tree decomposition with maximum bag size $\leq k$ for a given arbitrary graph (of tree width $k$) is computationally intractable. Fortunately, for a $k$-tree or a backbone $k$-tree generated with the method presented in section 3, an optimal tree decomposition is already available according to Proposition~\ref{td}. 

Tree decomposition makes it possible to compute efficiently a large class of graph optimization problems on graphs of small tree width $k$, which are otherwise computationally intractable on restricted graphs. In particular, on the tree decompositions $({\cal K}, {\cal E})$ of a $k$-tree, many global optimization computation can be systematically solved in linear time  \cite{Bodlaender1988,ArnborgAndProskurowski1989,ArnborgEtAl1991,Courcelle1990}. 
The central idea is to build one dynamic programming table for every $(k+1)$-clique, and according to the tree decomposition topology, the dynamic programming table for a parent 
tree node is built from the tables of its child nodes. For every node (i.e., every $(k+1)$-clique), all (partial) solutions associated with the $k+1$ random variables are maintained, and (partial) solutions associated with variables only belonging to its children nodes are  optimally selected conditional upon the variables it shares with its children node. 
The optimal solution associated with the whole network is then present at the root of the tree decomposition. The process takes $f(k)n$ time, for some possibly exponential function $f$, where $f(k)$ is the time to build a table, one for each of the $O(n)$ nodes in the tree decomposition. For small or moderately large values of $k$, such algorithms scale linearly with the number of random variables in the network.

\section{Concluding Remarks}

We have generalized Chow and Liu's seminal work from Markov trees to Markov networks of tree width $\leq k$. In particular, we have proved that model approximation with Markov networks of tree width $\leq k$ has the minimum information loss when the network topology is a maximum spanning $k$-tree. We have also shown that learning Markov network topology of backbone $k$-trees can be done in polynomial time for every fixed $k$, in contrast to the intractability in learning Markov $k$-tree without the backbone constraint. This result also holds for a broader range of constraints. 

The backbone constraint stipulates a linear structure inherent in many Markov networks. Markov backbone $k$-trees are apparently suitable for modeling random systems involving the time series. In particular, we have shown the $k^{\rm th}$ order Markov chains are actually Markov backbone $k$-trees. The constrained model is also ideal for modeling systems that possess higher order relations upon a linear structure and has been successful in modeling 3-dimensional structure of bio-molecules and in modeling semantics in computational linguistics, among others.

\section*{Acknowledgement}

This research is supported in part by a research grant from National Institute of Health (NIGMS-R01) under the Joint NSF/NIH Initiative to Support Research at the Interface of the Biological and Mathematical Sciences.

\newpage

\newpage

\section*{Appendix A}

\begin{theorem}
\label{appendix}
The probability function expressed in {\rm (\ref{joint_prob})} for Markov $k$-tree $G$ remains the same regardless the choice of creation order for $G$.
\end{theorem}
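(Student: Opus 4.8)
I would prove this by showing that the product $\prod_{i=1}^{n} P(X_i \mid \pi_{\hat{E}}(X_i))$ can be rewritten in a form that only depends on the graph $G$ and not on the chosen creation order. The guiding intuition is the clique/separator factorization: using the derivation behind (\ref{joint_prob}), the initial $k$ vertices contribute $P(\bc_k)$ by the chain rule (as in (\ref{ck})), while each subsequently created vertex $X_i$ ($i>k$) has parent set equal to a $k$-clique $\bc_{i-1}$ and forms a $(k+1)$-clique $\kappa_i=\bc_{i-1}\cup\{X_i\}$, so that $P(X_i\mid\bc_{i-1})=P(\kappa_i)/P(\bc_{i-1})$. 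Telescoping the initial $P(\bc_k)$ against the first denominator yields
\[
P_G(\bx)\;=\;\frac{\prod_{i=k+1}^{n} P(\kappa_i)}{\prod_{i=k+2}^{n} P(\bc_{i-1})}\;=\;\frac{\prod_{\kappa\in{\cal K}} P(\bx_\kappa)}{\prod_{S\in{\cal S}} P(\bx_S)},
\]
where ${\cal K}$ is the set of all $(k+1)$-cliques of $G$ (a graph invariant, since by Proposition~\ref{prefix} the $(k+1)$-cliques are exactly determined by $G$) and ${\cal S}$ is the multiset of separating $k$-cliques used along the order. Thus the only thing to prove is that the separator multiset ${\cal S}$ is itself independent of the creation order; this is the main obstacle.

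My primary plan is to establish order-invariance by induction on $n=|\bx|$, working directly with the expression in (\ref{joint_prob}) and avoiding the separator multiset entirely. Write $V(G,\Phi)=\prod_{i=1}^{n} P(X_i\mid\pi_{\hat{E}}(X_i))$ for the value computed from creation order $\Phi$. In the base cases $n\leq k+1$ the graph is a single clique and $V(G,\Phi)=P(\bx)$ is the full joint by the chain rule, hence independent of $\Phi$. For the inductive step, observe that the last vertex $X_n$ introduced by any creation order is \emph{simplicial}: its neighborhood is exactly $\bc_{n-1}=\pi_{\hat{E}}(X_n)$, so $V(G,\Phi)=P(X_n\mid N(X_n))\cdot V(G-X_n,\Phi')$ where $\Phi'$ is the induced creation order of the $k$-tree $G-X_n$. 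When two orders $\Phi,\Phi'$ end in the \emph{same} simplicial vertex, the factor $P(X_n\mid N(X_n))$ matches and the induction hypothesis applied to $G-X_n$ finishes the case.

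The crux is therefore comparing two orders ending in \emph{different} simplicial vertices $u\neq v$, and the key lemma I would prove first is that in a $k$-tree with $n\geq k+2$ any two distinct simplicial (degree-$k$) vertices are non-adjacent: if $u\in N(v)$ then deleting $u$ would leave $v$ with degree $k-1$, contradicting the fact that a $k$-tree on $\geq k+1$ vertices has minimum degree $k$. Given this, $v$ remains simplicial in $G-u$ with $N_{G-u}(v)=N_G(v)$, so I can choose a creation order of $G$ ending in $u$ whose restriction to $G-u$ ends in $v$; by induction on $G-u$ this has the same value as $\Phi$, and peeling $u$ then $v$ gives $P(u\mid N(u))\,P(v\mid N(v))\,V(G-\{u,v\},\cdot)$. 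Running the symmetric construction for an order ending in $v$ gives the same scalar product (multiplication commutes) times $V(G-\{u,v\},\cdot)$, which agrees by the induction hypothesis on the common graph $G-\{u,v\}$. Hence $V(G,\Phi)=V(G,\Phi')$ for all pairs of creation orders, proving the theorem. The clique/separator formula above then serves as an independent sanity check and alternative route, where invariance of ${\cal S}$ follows from the fact that each $k$-clique $S$ occurs as a separator with multiplicity one less than the number of $(k+1)$-cliques containing it.
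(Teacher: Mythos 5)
Your proof is correct, and it takes a genuinely different route from the paper's. The paper also argues by induction on $n$, but it fixes a reference creation order $\Phi_{\bx}$ ending in $X_n$ and, given an arbitrary order $\Psi_{\bx}$, locates where $X_n$ sits inside $\Psi_{\bx}$ (either in the initial clique ${\bf D}_k$, or as the $m$-th attached vertex) and shows by direct regrouping of the product that the factor $P(X_n\mid\bc_{n-1})$ pulls out and the remainder of $\Psi_{\bx}$ is a creation order of $G-X_n$, so one application of the induction hypothesis finishes. You instead peel off the \emph{last} vertex of each of the two orders, which may be different vertices $u\neq v$, and resolve the mismatch by a confluence argument: your key lemma that two distinct simplicial vertices of a $k$-tree on at least $k+2$ vertices are non-adjacent (correct, and correctly restricted to $n\geq k+2$, the only case where $u\neq v$ can occur outside your base cases) lets you commute the two deletions and compare on the common $k$-tree $G-\{u,v\}$. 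Your route buys a cleaner case analysis -- no need to verify the product rearrangements of the paper's Cases 1 and 2 -- at the price of the extra structural lemma, a strong induction reaching back to $n-2$, and two standard facts you use but do not prove: that deleting a simplicial degree-$k$ vertex leaves a $k$-tree, and that any such vertex can be made last in some creation order (the paper is equally informal on the analogous points, asserting ``it is not hard to see''). The clique/separator factorization in your opening paragraph is a nice invariant-based alternative, but as you correctly note it would itself require proving order-invariance of the separator multiset, so treating it only as a sanity check is the right call.
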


\begin{proof} 
Let $\Psi_{\bx}$ and $\Phi_{\bx}$ be two creation orders for $k$-tree $G=(\bx, E)$. For the same $k$-tree $G=(\bx, E)$, we define $P_{G}(\bx| \Psi_{\bx})$ and $P_{G}(\bx| \Phi_{\bx})$ to be respectively the joint probability distribution functions of $\bx$ under the topology graph $G$ with creation orders $\Psi_{\bx}$ and $\Phi_{\bx}$. Without loss of generality, we further assume that $\Phi_{\bx}$ be the creation order for $G$ as given in Definition~\ref{creation_order} of the following form:
\[ \bc_k X_{k+1} \bc_{k+1} X_{k+2} \dots \bc_{n-1} X_n \hspace{10mm} n\geq k\]

By induction on $n$, we will prove in the following the statement that $P_G(\bx | \Psi_{\bx}) = P_G(\bx | \Phi_{\bx})$.

\vspace{2mm}
\noindent
\underline{\it Basis}:  when $n=k$, 
\[P_G(\bx | \Psi_{\bx}) = P(X_1, \dots, X_k |\Psi_{\bx}) = P(X_1, \dots, X_k) = P(\bc_{k}) = 
P_G(\bx | \Phi_{\bx})\]

\noindent
\underline{\it Assumption}:  the statement holds for any $k$-tree of $n-1$ random variables. 

\vspace{2mm}

\noindent
\underline{\it Induction}:
Let $G=(\bx, E)$, where $\bx=\{X_1, \dots, X_n\}$, be a $k$-tree. Then it is not hard to see that subgraph $G'=(\bx', E')$ of $G$, after variable $X_n$ is removed,  is indeed a $k$-tree
for random variables $\bx'=\{X_1, \dots, X_{n-1}\}$. Furthermore, according to
Definition~\ref{creation_order}, 
$\Phi_{\bx'}$ is a creation order for $k$-tree $G'$.



Now we assume creation order $\Psi$ for graph $G =(\bx, E)$ to be
\[ {\bf D}_{k} Y_{k+1} {\bf D}_{k+1} Y_{k+2} \dots {\bf D}_{n-1} Y_{n}\]
where ${\bf D}_{k} = \{Y_1, \dots, Y_k\}$, and $Y_i \in \bx$ for all $i=1,\dots, n$.
Then there are two possible cases for $X_n$ to be in the creation order $\Psi_{\bx}$.

\vspace{2mm}
\noindent
{\bf Case 1}. $X_n \in {\bf D}_{k}$. 

Because $X_n$ shares edges only with variables in $\bc_k$, we have ${\bf D}_{k} \cup \{Y_{k+1} \} = \bc_{n-1} \cup \{ X_n\}$ and  ${\bf D}_{k+1} = {\bf D}_{k}\backslash \{X_n\} \cup \{Y_{k+1}\} = {\bf C}_{n-1}$. Therefore,  
\[\Psi_{\bx'} = {\bf D}_{k+1} Y_{k+2} \dots {\bf D}_{n-1} Y_{n}\] is a creation order for subgraph $G'=(\bx', E')$. Thus, based on (\ref{joint_prob}), we have 
\begin{equation}
\begin{split}
P_G(\bx |\Psi_{\bx}) & = P({\bf D}_{k}) P(Y_{k+1} | {\bf D}_{k}) \prod_{i=k+1}^{n-1} P(Y_{i+1} | {\bf D}_i, \Psi_{\bx'})\\
&=  P({\bf D}_{k}) P(Y_{k+1} | {\bf D}_{k}) 
\frac{P_{G'}(\bx' | \Psi_{\bx'})}{P({\bf D}_{k+1})}\\
& = P({\bf D}_{k} \cup \{Y_{k+1}\}) \frac{P_{G'}(\bx' | \Psi_{\bx'})}{P({\bf D}_{k+1})}\\
& = P({\bf C}_{n-1} \cup \{X_n\}) \frac{P_{G'}(\bx' | \Psi_{\bx'})}{P({\bf D}_{k+1})} \\
& = P({\bf C}_{n-1}) P(X_n | {\bc}_{n-1}) \frac{P_{G'}(\bx' | \Psi_{\bx'})}{P({\bf D}_{k+1})} \\
& = P(X_n | {\bc}_{n-1}) P_{G'}(\bx' | \Psi_{\bx'}) 
\end{split}
\end{equation}
By the assumption, $P_{G'}(\bx' | \Psi_{\bx'}) = P_{G'}(\bx' | \Phi_{\bx'})$, we have
\begin{equation}
\begin{split}
P_G(\bx |\Psi_{\bx}) 
& = P(X_n | \bc_{n-1}) P_{G'}(\bx' | \Phi_{\bx'}) \\
& = P(X_n | \bc_{n-1}) P(\bc_{k}) \prod_{i=k+1}^{n-1} P(X_{i} | {\bf C}_{i-1}, \Phi_{\bx'}) \\
& = P(\bc_{k}) \prod_{i=k+1}^{n} P(X_{i} | {\bf C}_{i-1}, \Phi_{\bx'}) \\
& = P_G(\bx | \Phi_{\bx})
\end{split}
\end{equation}

\vspace{2mm}
\noindent
{\bf Case 2}. $X_n = Y_m$ for some $m \geq k+1$.

Then ${\bf D}_{m-1} = {\bf C}_{n-1}$, and 
\[\Psi_{\bx'}  = {\bf D}_{k} Y_{k+1} \dots {\bf D}_{m-2} Y_{m-1}
{\bf D}_{m+1} Y_{m+1} \dots {\bf D}_n Y_n\]
is a creation order for subgraph $G'=(\bx', E')$. Based on (\ref{joint_prob}),
we have
\begin{equation}
\begin{split}
P_G(\bx |\Psi_{\bx}) & = P({\bf D}_{k}) \prod_{i=k+1}^n P(Y_{i} | {\bf D}_{i-1}, \Psi_{\bx})\\
&=  \bigg\{ P({\bf D}_{k}) \prod_{i=k+1}^{m-1} P(Y_{i} | {\bf D}_{i-1}, \Psi_{\bx'}) \prod_{i=m+1}^n P(Y_{i} | {\bf D}_{i-1}, \Psi_{\bx'})  \bigg\} \, P(Y_m | {\bf D}_{m-1}) \\
& = P_{G'}(\bx'|\Psi_{\bx'}) P(Y_m|{\bf D}_{m-1}) \\
& = P_{G'}(\bx'|\Psi_{\bx'}) P(X_n|\bc_{n-1})
\end{split}
\end{equation}
By the assumption, $P_{G'}(\bx'|\Psi_{\bx'}) = P_{G'}(\bx' | \Phi_{\bx'})$, we have
\begin{equation}
\begin{split}
P_G(\bx |\Psi_{\bx})
& = P_{G'}(\bx' | \Phi_{\bx'}) P(X_n|\bc_{n-1})\\
& = P(\bc_{k}) \prod_{i=k+1}^{n-1} P(X_i|\bc_{i-1}, \Phi_{\bx'}) \, P(X_n|\bc_{n-1})\\
& = P(\bc_{k}) \prod_{i=k+1}^{n} P(X_i|\bc_{i-1}, \Phi_{\bx})\\
& = P_{G}(\bx |\Phi_{\bx})
\end{split}
\end{equation}
\end{proof}


\begin{thebibliography}{XX}





\bibitem{ArnborgAndProskurowski1989}
Arnborg S. and Proskurowski A., (1989) 
Linear time algorithms for NP-hard problems restricted to partial $k$-trees.
{\it Discrete Applied Mathematics} {\bf 23}, Pages 11Ð24


\bibitem{ArnborgEtAl1991}
Arnborg, S.,  Lagergren, J., and Seese, D. (1991) Easy problems for tree-decomposable graphs, {\it Journal of Algorithms}, {\bf 12}, 308-340.

\bibitem{BachAndJordan2002}
Bach, F. and Jordan, MI. (2002) Thin Junction trees, in Dietterich, Becker, and Ghahramani, ed., {\it Advances in Neural Information Processing Systems}, {\bf 14}. MIT Press.

\bibitem{Bern1987}
Bern, M.W., (1987) Network design problems: Steiner trees and spanning $k$-trees
{\it PhD thesis}, University of California, Berkeley, CA.

\bibitem{Bishop2006}
Bishop, C.M. (2006) {\it Pattern Recognition and Machine Learning}, Springer.




\bibitem{Bodlaender1988}
Bodlaender, H.L. (1988), Dynamic programming on graphs with bounded treewidth, {\it Proc. 15th International Colloquium on Automata, Languages and Programming, Lecture Notes in Computer Science}, {\bf 317}, 105Ð118.

\bibitem{Bodlaender2006}
Bodlaender, H.L. (2006) Treewidth: Characterizations, Applications, and Computations. 
{\it Proceedings of Workshop in Graph Theory}, 1-14.



\bibitem{BradleyAndGuestrin2010}
Bradley, J. and Guestrin, C. (2010) 
Learning tree conditional random fields, {\it Proceedings of 21$^{th}$ International Conference on machine Learning}, 2010.

\bibitem{CaiAndMaffray1993}
Cai, L. and Maffray, F. (1993) On the spanning k-tree problem,  {\it Discrete Applied Mathematics}, {\bf 44}, 139Ð156. 

\bibitem{ChickeringEtAl1994}
Chickering, DM., Geiger, D., and Heckerman, D. (1994)
Learning bayesian networks is NP-Hard, {\it Technical Rep MSR-TR-94-17}, 
Microsoft Research Advanced Technology Division.

\bibitem{ChowAndLiu1968}
Chow, CK. and Liu, CN. (1968)
Approximating discrete probability distribution with dependence trees, {\it IEEE Transactions on Information Theory}, {\bf 14}: 462-467.

\bibitem{Courcelle1990}
Courcelle, B. (1990) The monadic second-order logic of graphs. I. recognizable sets of finite graphs, {\it Information and Computation}, {\bf 85}, 12-75.



\bibitem{DalyEtAl2011}
Daly, R, Qiang, S. and  Aitken, S. (2011), Learning bayesian networks: approaches and issues.
{\it Knowledge Engineering Review}, vol {\bf 26}, no. 2, pp. 99-127.




\bibitem{Dasgupta1999}
Dasgupta, S. (1999) Learning polytree. In Laskey and Prade, ed. {\it Proceedings Conference on Uncertainty in AI}, 134-141.


\bibitem{DingEtAl2014}
Ding, L. Samad, A., Xue, X., Huang, X., Malmberg, R., and Cai, L. (2014) Stochastic $k$-tree grammar and its application in biomolecular structure modeling. {\it Lecture Notes in Computer Science}, {\bf 8370}, 308-322.

\bibitem{DingEtAl2015}
Ding, L., Xue, X., LaMarca, S., Mohebbi, M., Samad, A., Malmberg, R., and Cai, L. (2014) {\it Accurate } prediction of RNA nucleotide interactions with backbone $k$-tree model. {\it Bioinformatics}, 



\bibitem{ElidanAndGould2008}
Elidan, G. and Gould, S. (2008) 
Learning bounded treewidth Bayesian networks,
{\it Journal of Machine Learning Research} {\bf 9}, 2699-2731.





\bibitem{Grimmett1973}
Grimmett, G. R. (1973), A theorem about random fields, {\it Bulletin of the London Mathematical Society}, {\bf 5} (1): 81Ð84.

\bibitem{GuptaAndNishimura1996}
Gupta, A. and Nishimura, N. (1996) 
The complexity of subgraph isomorphism for classes of partial $k$-trees. {\it Theoretical Computer Science}, {\bf 164}: 287-298. 



\bibitem{KargerAndSrebro2001}
Karger, D., and Srebro, N. (2001) 
Learning Markov networks: maximum bounded tree-width graphs,
{\it Proceedings of 12th ACM-SIAM Symposium on Discrete Algorithms}.



\bibitem{KindermannAndSnell1980}
Kindermann, R. and Snell, J.L. (1980). {\it Markov Random Fields and Their Applications},  American Mathematical Society.

\bibitem{KinneyAndAtwal2014}
Kinney, J. B. and Atwal, G. S. (2014)  Equitability, mutual information, and the maximal information coefficient. {\it Proceedings of the National Academy of Sciences}, {\bf 111}:9,  3354Ð3359


\bibitem{KoivistoAndSood2004}
Koivisto, M., and Sood, k. (2004) Exact Bayesian structure discovery
in Bayesian networks. {\it Journal of Machine Learning Research},
{\bf 5}:549-573.

\bibitem{KollerAndFriedman2010}
Koller, D. and N. Friedman (2010) Probabilistic Graphical Models: Principles and 
Techniques, MIT Press.

\bibitem{KollerAndSahami1996}
Koller, D. and  Sahami, M. (1996) Toward optimal feature selection. {\it Proceedings of the Thirteenth International Conference on Machine Learning}, 284-292.


\bibitem{KramerAndSchafer2009}
Kramer, N., Schafer, J., and  Boulesteix, A. (2009) Regularized
estimation of large-scale gene association networks using graphical
Gaussian models. {\it BMC bioinformatics}, {\bf 10} (1):1.


\bibitem{KullbackAndLeibler1951}
Kullback, S. and Leibler, RA. (1951) On information and sufficiency, {\it Annals of Mathematical Statistics}, {\bf 22}(1):79-86.


\bibitem{KwisthoutEtAl2010}
Kwisthout, JH., Bodlaender, HL, van der Gaag, LC. (2010)
The necessity of bounded treewidth for efficient inference in Bayesian networks, 
{\it Proceedings 19th European Conference on Artificial Intelligence}, 237-242. 

\bibitem{LeeAndJun2015}
Lee, J., and Jun, CH. (2015) 
Classification of high dimensionality data through feature selection using Markov blanket,
{\it Industrial Engineering and Management Systems}, {\bf 14}:2, 210-219. 

\bibitem{LeeEtAl2006}
Lee, S. Ganapathi, V., and Koller, D. (2006) Efficient structure
learning of Markov networks using $L_1$-regularization. In {\it Advances in
Neural Information Processing Systems}, 817-824.


\bibitem{Lewis1959}
Lewis, PM, II, Approximating probabilistic distributions to reduce storage requirements, 
{\it Information and Control}, {\bf 2}, 214-225.

\bibitem{MaoEtAl2015}
Mao, Q., Wang, L., Goodison, S., and Sun, Y. (2015) 
Dimensionality reduction via graph structure learning, 
{\it Proceedings of the 21th ACM SIGKDD International Conference on Knowledge Discovery and Data Mining}, 765-774.

\bibitem{MatousekAndThomas1992}
Matousek, J. and Thomas, R. (1992) On the complexity of finding iso- and other morphisms for partial k-trees, {\it Discrete Mathematics}, {\bf 108}, 343-364. 

\bibitem{Meek2001}
Meek, C. (2001) Finding a path is harder than finding a tree, {\it Journal of Artificial Intelligence Research}, {\bf 15}: 383-389.

\bibitem{MeilaAndJordan2000}
Meila, M. and Jordan, MI (2000) Learning with mixtures of trees, {\it Journal of Machine Learning Research}, {\bf 1}: 1-48.

\bibitem{NagarajanEtAl2013}
Nagarajan, R., Scutari, M., and Lebre., S. (2013)
{\it Bayesian Networks in R with Applications in Systems Biology}, Springer.

\bibitem{NarasimhanAndBilmes2003}
Narasimhan, M. and Bilmes, J. (2003) Pac-learning bounded tree-width graphical models, in Chickering and Halpern, ed., {\it Proceedings of Conference on Uncertainty in AI}.


\bibitem{Patil1986}
Patil, H. P. (1986) On the structure of $k$-tree. {\it Journal of Combinatorics, Information and System Sciences}, {\bf 11} (2-4):57-64.


\bibitem{Pearl1988}
Pearl, J., (1988) {\it Probabilistic Reasoning in Intelligent Systems: Networks of Plausible Inference}, Morgan Kaufmann Series in Representation and Reasoning.


\bibitem{RobertsonAndSeymour1984}
Robertson, N.,  Seymour, P.D. (1984), Graph minors III: Planar tree-width, {\it Journal of Combinatorial Theory}, Series B {\bf 36} (1): 49Ð64.
\bibitem{RobertsonAndSeymour1986}
Robertson, N. and Seymour, P.D. (1986) Graph minors II. Algorithmic aspects of tree-width, {\it Journal of Algorithms}, {\bf 7}, 309-322.

\bibitem{RueAndHeld2005}
Rue, H. and Held, L. (2005) {\it Gaussian Markov random felds: theory
and applications}. CRC Press.


\bibitem{Shannon1948}
Shannon, C.E. (1948), A Mathematical Theory of Communication, {\it , Bell System Technical Journal}, {\bf 27}, pp. 379Ð423 \& 623Ð656.

\bibitem{Srebro2003}
Srebro, N. (2003) Maximum likelihood bounded
tree-width Markov networks, {\it Artificial Intelligence}, {\bf 143} (2003) 123Ð138.

\bibitem{SunAndHan2013}
Sun, Y. and Han, J. (2013) 
Mining heterogeneous information networks: a structural analysis approach,
{\it ACM SIGKDD Explorations Newsletter}, {\bf 14}:2, {20--28}.

 
\bibitem{Suzuki1999}
Suzuki, J. (1999)  Learning Bayesian belief networks based on the MDL principle:
An efficient algorithm using the branch and bound technique. {\it IEICE
TRANSACTIONS on Information and Systems}, {\bf 82}(2):356-367.

\bibitem{SzantaiAndKovacs2012}
Sz\'{a}ntai, T. and  Kov\'{a}cs E (2012) Hypergraphs as a mean of discovering the dependence structure of a discrete multivariate probability distribution. {\it Annals of Operations Research},  vol {\bf 193}, pp 71Ð90.


\bibitem{TeyssierAndKoller2005}
Teyssier, M. and Koller, D. (2005)
Ordering-based search: a simple and effective algorithm for learning Bayesian networks,
{\it Proceedings of the Twenty-First Conference on Uncertainty in Artificial Intelligence}, 584-590.

\bibitem{TimmeEtAl2014}
Timme, N, Alford, W., Flecker, B., and Beggs, J.M. (2014). Multivariate information measures: an experimentalist's perspective. {\it Journal of Computational Neuroscience}, {\bf 36} (2), pp 119Ð140.

\bibitem{Tomczak2015}
Tomczak, K., Czerwinska, P. and Wiznerowicz, M. (2015) The Cancer Genome Atlas (TCGA): an immeasurable source of knowledge. {\it Contemporary Oncology (Pozn)}, {\bf 19}(1A): A68-77.

\bibitem{TortiAndTorti2013}
Torti, SV. and Torti, FM. (2013) Iron and cancer: more ore to be mined. {\it Nature Review Cancer}, {\bf 13}(5): 342-55.

\bibitem{Valiant1975}
Valiant, L.G. (1975), General context-free recognition in less than cubic time, {\it Journal of Computer and System Sciences} {\bf  10} (2): 308Ð314

\bibitem{WuEtAl2014}
Wu, X., Zhu, X., Wu, G., and Ding, W. (2014)
 Data mining with big data, {\it IEEE Transactions on Knowledge and Data Engineering},
 {\bf 26}:{1}, {97--107}.
 
 \bibitem{XuEtAl2014}
 Xu, Y., Cui, J., and Puett, D. (2014) {\it Cancer Bioinformatics}, Springer.
 
 
\bibitem{YuanAndMalone2013}
Yuan, C.,  and Malone, B., (2013) Learning optimal Bayesian networks:
A shortest path perspective. {\it J. Artificial Intelligence Research} {\bf 48}:23-65. 



\bibitem{YinEtAl2015}
Yin, W., Garimalla, S., Moreno, A., Galinski, MR., and Styczynski, MP. (2015)
A tree-like Bayesian structure learning algorithm for small-sample datasets from complex biological model systems, {\it BMC Systems Bology} {\bf 9}:49.

\bibitem{ZhangEtAl2015}
Zhang, Z., Bai, L., Liang, Y., and Hancock, ER. (2015)
Adaptive graph learning for unsupervised feature selection,
{\it Computer Analysis of Images and Patterns} LNCS Vol 9256, 790-800. 

\bibitem{ZhaoEtAl2015}
Zhao, I, Zhou, Y., Zhang, X., and Chen, L. (2015) 
Part mutual information for quantifying direct associations in networks, {\it Proceedings of the National Academy of Sciences}, {\bf 113}: 18, 5130Ð5135.
\end{thebibliography}
\end{document}